\def\bs{\boldsymbol}
\def\inr{\in\mathbb{R}}
\newcommand{\norm}[1]{\left\lVert#1\right\rVert}
\let\NAT@parse\undefined
\def\real{\mathbb{R}}
\newcommand\oprocendsymbol{\hbox{$\square$}}
\newcommand\oprocend{\relax\ifmmode\else\unskip\hfill\fi\oprocendsymbol}
\newtheorem{theorem}{Theorem}
\newtheorem{lemma}{Lemma}
\newtheorem{corollary}{Corollary}
\newtheorem{remark}{Remark}
\newtheorem{assumption}{Assumption}
\newcommand\bit[1]{\textit{\textbf{#1}}}
\begin{document}
%
\title{Robust tracking of an unknown trajectory with a multi-rotor UAV: A high-gain observer approach}
%
%
%

\author{Connor J. Boss,
        Vaibhav Srivastava, and
        Hassan K. Khalil
\thanks{C. J. Boss, V. Srivastava, and H. K. Khalil are with the Department of Electrical and Computer Engineering, Michigan State University, East Lansing, MI, 48823 USA e-mail: \texttt{\{bossconn, vaibhav, khalil\}@egr.msu.edu}}}
\maketitle

\begin{abstract}
We study a trajectory tracking problem for a multi-rotor in the presence of modeling error and external disturbances. The desired trajectory is unknown and generated from a reference system with unknown or partially known dynamics. We assume that only position and orientation measurements for the multi-rotor and position measurements for the reference system can be accessed. We adopt an extended high-gain observer (EHGO) estimation framework to estimate the feed-forward term required for trajectory tracking, the multi-rotor states, as well as modeling error and external disturbances. We design an output feedback controller for trajectory tracking that comprises a feedback linearizing controller and the EHGO. 
We rigorously analyze the proposed controller and establish its stability properties. Finally, we numerically illustrate our theoretical results using the example of a multi-rotor landing on a ground vehicle.  
\end{abstract}


%
\IEEEpeerreviewmaketitle

\section{Introduction}
We envision a future in which aerial vehicles provide services such as parcel delivery, remote monitoring, and maintenance. In such scenarios, a ground-based vehicle may provide charging services for multi-rotors or act as a staging area for storing parcels. 
In these cases, the multi-rotors will be required to make multiple landings on the ground vehicle, which need not be stationary.

One of the principle challenges in achieving a landing on a moving vehicle is the generation of a trajectory. The multi-rotor and ground vehicle 
may be operated by different service-providers, which may prohibit communication between the vehicles due to security reasons, or if the two have incompatible communication equipment. In this case, the trajectory of the ground vehicle will need to be inferred by the multi-rotor. 

The multi-rotor may only be able to measure the position of the ground vehicle in real-time. However, to achieve efficient control performance, higher-order derivatives of the trajectory are required. Furthermore, the modeling error and external disturbances---such as wind gusts, the ground effect, aerodynamic drag, parcel size and weight, or items shifting inside a parcel---may also deteriorate the multi-rotor's tracking performance resulting in a poor landing. In this work, we study this problem and design an output feedback controller that addresses these challenges.

From control design to path planning and disturbance rejection, much work has been devoted to studying multi-rotor UAV control design; see~\cite{papachristos2018modeling,kumar2012opportunities} for a survey. 
The problem of autonomous landing of a multi-rotor on a mobile platform has also received some attention~\cite{kong2014vision,gautam2014survey}. 
Many control methodologies have been applied to landing on a mobile platform, including model predictive control \cite{feng2018autonomous,maces2017autonomous}, PI control \cite{herisse2011landing,serra2016landing,sanchez2014approach}, and feedback linearizing control \cite{hoang2017vision}. 

State estimators such as a Kalman filter have been used to estimate the dynamic state of the mobile platform \cite{sanchez2014approach,kim2014outdoor} under the assumptions that the dynamic model of the mobile platform is known and it travels with unknown constant velocity. Through our EHGO design, these assumptions are relaxed, requiring no information about the mobile platform's dynamics or input. An alternative approach to estimate the state of the mobile platform uses optical flow data \cite{herisse2011landing,serra2016landing}, or visual cue data \cite{kendoul2014four} in which a dynamic model of the mobile platform is not required. In these cases the relative velocity is estimated through the optical flow algorithms and is minimized in the control to ensure tracking.


Many of the approaches in the literature either do not consider modeling error and external disturbances, or consider them to be constant or slowly time-varying \cite{herisse2011landing,serra2016landing}. 
In contrast, our approach only requires that the disturbance be bounded and continuously differentiable.

In this paper, we study a trajectory tracking problem for a multi-rotor in the presence of modeling error and external disturbances. The desired trajectory is unknown and generated from a reference system with unknown or partially known dynamics. We assume that only position and orientation measurements for the multi-rotor and position measurements for the reference system can be accessed.  We design an output feedback controller that robustly tracks such unknown desired trajectories. 
The contributions of this work are as follows:
\begin{itemize}
    \item We design and rigorously analyze an EHGO to estimate modeling error and external disturbances, feed-forward control for trajectory tracking, and multi-rotor states for output feedback control.
    \item We design and analyze a robust feedback linearizing controller that mitigates modeling errors and external disturbances using their estimates. 
    \item We rigorously characterize the stability of the overall output feedback system. 
    \item We illustrate the effectiveness of our output feedback controller through simulation using the example of a multi-rotor landing on a mobile platform.
\end{itemize}

The remainder of the paper is organized as follows. The system dynamics and control are introduced in Section \ref{sec:systemDynamics} and \ref{sec:ControlDesign}, respectively. The controller is analyzed in Section \ref{sec:stabilityAnalysis}, and is validated through simulation in 
Section \ref{sec:simulation}. Conclusions are presented in Section \ref{sec:conclusion}.

\section{System Dynamics}\label{sec:systemDynamics}
A multi-rotor UAV is an underactuated mechanical system. While there can be $n\in\{4,6,8,...\}$ rotors, only four degrees of freedom can be controlled in the classic configuration where all rotors are co-planar. To handle the underactuation, as discussed below,  the rotational dynamics are controlled to create a virtual control input for the translational dynamics.


\subsection{Rotational Dynamics}
The rotational dynamics of the multi-rotor are \cite{lee2010geometric} 
\begin{equation}\label{eq:rotational-dynamics}
    \bs{\tau} = J\dot{\Omega} + \Omega \times J\Omega,
\end{equation}
where $J\inr^{3\times3}$ is the inertia matrix, $\bs{\tau}\inr^3$ is the torque applied to the multi-rotor and $\Omega\inr^3$ is the angular velocity, each expressed in the body-fixed frame. 

Consider the orientation of the multi-rotor expressed in terms on Euler angles $\bs{\theta}_1 = [\phi \ \theta \ \psi]^\top\in (-\frac{\pi}{2},\frac{\pi}{2})^2 \times (-\pi,\pi]$. The angular velocity $\Omega$ is related to the Euler angle rates $\bs{\theta}_2 = [\dot{\phi} \ \dot{\theta} \ \dot{\psi}]^\top \inr^3$ in the inertial frame as:
\begin{equation*}
    \bs{\theta}_2 = \Psi\Omega, \quad \Psi = \left[\begin{matrix}
    1&s_\phi t_\theta&c_\phi t_\theta\\
    0&c_\phi&-s_\phi\\
    0&s_\phi/c_\theta&c_\phi/c_\theta \end{matrix}\right], \quad \Omega = \Psi^{-1}\bs{\theta}_2,
\end{equation*}
where $c_{(\cdot)},s_{(\cdot)},t_{(\cdot)}$ denote $\cos(\cdot)$, $\sin(\cdot)$, $\tan(\cdot)$, respectively. The rotational dynamics can be equivalently written in terms of Euler angles:
\begin{equation}\label{eq:rotationalDynamics}
    \begin{split}
        \dot{\bs{\theta}}_1 &= \bs{\theta}_2,\\
        \dot{\bs{\theta}}_2 &= \dot{\Psi} \Psi^{-1} \bs{\theta}_2 - \Psi J^{-1}(\Psi^{-1} \bs{\theta}_2 \times J\Psi^{-1} \bs{\theta}_2)\\
        & \quad + \Psi J^{-1}\bs{\tau} + \bs{\sigma}_\xi,\\
    \end{split}
\end{equation}
where $\bs{\sigma}_\xi \inr^3$ is an added term to represent the lumped rotational disturbance and satisfies \textit{Assumption \ref{as:disturbance}}.

\begin{assumption}[Properties of Disturbances]\label{as:disturbance}
For a control system with state $\bs x \in \real^n$, expressed in lower triangular form, such as~\eqref{eq:rotationalDynamics}, any disturbance term is assumed to enter only $x_n$ dynamics. The disturbance term  is also assumed to be continuously differentiable and its partial derivatives with respect to states are bounded on compact sets of those states for all $t\geq0$.
\end{assumption}

Let $\bs{\theta}_r = [\phi_r \ \theta_r \ \psi_r]^\top \in (-\frac{\pi}{2},\frac{\pi}{2})^2 \times (-\pi,\pi]$ be the rotational reference signal. Define rotational tracking error variables
\begin{equation*}
    \begin{gathered}
        \bs{\xi}_1 = \bs{\theta}_1 - \bs{\theta}_r, \quad \bs \xi_2 = \dot{\bs{\xi}_1} = \bs{\theta}_2 - \dot{\bs{\theta}}_r, \quad \bs \xi = [\bs \xi_1^\top \; \bs \xi_2^\top]^\top.
    \end{gathered}
\end{equation*}
Then, the rotational dynamics can be written in terms of tracking error
\begin{equation}\label{eq:rotationalErrorDynamics}
    \begin{split}
        \dot{\bs{\xi}}_1 &= \bs{\xi}_2, \\
        \dot{\bs{\xi}}_2 &= f(\bs{\xi},\bs{\theta}_1,\dot{\bs{\theta}}_r) + G(\bs{\theta}_1)\bs{\tau} + \bs{\sigma}_\xi - \ddot{\bs{\theta}}_r,
    \end{split}
\end{equation}
where
\begin{equation*}
    \begin{split}
        f(\bs{\xi},\bs{\theta}_1,\dot{\bs{\theta}}_r) &= \dot{\Psi}\Psi^{-1}(\bs{\xi}_2 + \dot{\bs{\theta}}_r) \\
        &\quad -\Psi J^{-1}(\Psi^{-1}(\bs{\xi}_2 + \dot{\bs{\theta}}_r) \times J\Psi^{-1}(\bs{\xi}_2 + \dot{\bs{\theta}}_r)), \\
        G(\bs{\theta}_1) &= \Psi J^{-1}.
    \end{split}
\end{equation*}
Suppose that only $\dot{\bar{\bs{\theta}}}_r$, an estimate of $\dot{\bs{\theta}}_r$, is known. Then \eqref{eq:rotationalErrorDynamics} can be rewritten as
\begin{equation}\label{eq:rotationalErrorDynamicsFinal}
    \begin{split}
        \dot{\bs{\xi}}_1 &= \bs{\xi}_2, \\
        \dot{\bs{\xi}}_2 &= f(\bs{\xi},\bs{\theta}_1,\dot{\bar{\bs{\theta}}}_r) + G(\bs{\theta}_1)\bs{\tau} + \bs{\varsigma}_\xi,
    \end{split}
\end{equation}
where $\bs{\varsigma}_\xi = \bs{\sigma}_\xi - \ddot{\bs{\theta}}_r + [f(\bs{\xi},\bs{\theta}_1,\dot{\bs{\theta}}_r) - f(\bs{\xi},\bs{\theta}_1,\dot{\bar{\bs{\theta}}}_r)]$, which also satisfies \textit{Assumption \ref{as:disturbance}} based on the properties of $f$ and by assuming the reference trajectory is continuously differentiable.

\subsection{Translational Dynamics}

The translational dynamics of the multi-rotor are \cite{lee2010geometric}
\begin{equation}\label{eq:translationalDynamics}
    \begin{split}
        \dot{\bs{p}}_1 &= \bs{p}_2, \\
        \dot{\bs{p}}_2 &= -\frac{u_f}{m}R_3(\bs{\theta}_1) + g\bs{e}_z + \bs{\sigma}_\rho.
    \end{split}
\end{equation}
Here, $\bs{p}_1 = [x \ y \ z]^\top\inr^3$ is the position of the center of mass of the aerial platform in the inertial frame, 
$u_f = \sum_{i=1}^n{\bar{f}_i} \inr$ is the cumulative thrust force, $\bar{f}_i \inr$ is the force generated by the $i$-th rotor, $m\inr$ is the mass of the aerial platform, $g$ is the gravitational constant, $\bs{e}_z = [0 \ 0 \ 1]^\top$, and $\bs{\sigma}_\rho \inr^3$ is the lumped translational disturbance term which satisfies \textit{Assumption \ref{as:disturbance}}, and $R_3(\bs{\theta}_1)\inr^3$ is 
\begin{equation*}
    R_3(\bs{\theta}_1) = \left[\begin{matrix}
    c_\phi s_\theta c_\psi + s_\phi s_\psi\\
    c_\phi s_\theta s_\psi - s_\phi c_\psi\\
    c_\phi c_\theta \end{matrix}\right].
\end{equation*}

Let $\bs{p}_r = [x_r \ y_r \ z_r]^\top \inr^3$ be the translational reference signal. Define translational error variables
\begin{equation*}
    \begin{gathered}
        \bs{\rho}_1 = \bs{p}_1 - \bs{p}_r, \quad \bs{\rho}_2 = \dot{\bs{\rho}_1} = \bs{p}_2 - \dot{\bs{p}}_r, \quad \bs \rho=[\bs \rho_1^\top \; \bs \rho_2^\top]^\top.
    \end{gathered}
\end{equation*}
Then, the translational dynamics can be written in terms of tracking error as
\begin{equation}\label{eq:translationalErrorDynamics}
    \begin{split}
        \dot{\bs{\rho}}_1 &= \bs{\rho}_2, \\
        \dot{\bs{\rho}}_2 &= -\frac{u_f}{m}R_3(\bs{\theta}_1) + g\bs{e}_z + \bs{\sigma}_\rho - \ddot{\bs{p}}_r.
    \end{split}
\end{equation}

\subsection{Reference System Dynamics}
We assume that the reference trajectory that the multi-rotor UAV will track is generated by the system
\begin{equation}\label{eq:groundVehicleDynamics}
    \begin{split}
        \dot{\bs{x}}_{c_1} &= \bs{x}_{c_2}, \\
        \dot{\bs{x}}_{c_2} &= f_c(\bs{x}_c,\bs{u}_c), \\
    \end{split}
\end{equation}
where $\bs{x}_{c_1} = [x_c \ y_c \ z_c]^\top \inr^3$ is the position of the reference system, $\bs x_c =[\bs x_{c_1}^\top, \bs x_{c_2}^\top]^\top$ is the system state, $\bs u_c$ is the unknown system input, and $f_c(\bs{x}_c,\bs{u}_c)$ is some unknown function. We take system input $\bs{u}_c = g_c(t,\bs{x}_c)$ and let $\bar{f}_c(t,\bs{x}_c) = f_c(\bs{x}_c,\bs{u}_c)$.  We assume that $\frac{\partial \bar{f}_c(t,\bs{x}_c)}{\partial \bs{x}_c}\dot{\bs{x}}_c$ satisfies \textit{Assumption \ref{as:disturbance}}.

\section{Control Design} \label{sec:ControlDesign}
In this section, we first design a trajectory tracking feedback linearizing controller for the rotational system and subsequently use the rotational trajectory to design a trajectory tracking controller for the translational system. 


\subsection{Rotational Control}
The rotational control feedback linearizes the rotational tracking error dynamics \eqref{eq:rotationalErrorDynamics} as follows
\begin{equation}\label{eq:rotationalControl}
        \bs{\tau}_d = G^{-1}(\bs{\theta}_1)[\bs{f}_r - f(\bs{\xi},\bs{\theta}_1,\dot{\bar{\bs{\theta}}}_r)],
\end{equation}
where $\bs{f}_r = -\beta_1\bs{\xi}_1 - \beta_2\bs{\xi}_2 - \bs{\varsigma}_\xi$, and $\beta_1,\beta_2\inr_{>0}$ are constant gains. This results in the following closed-loop rotational tracking error system
\begin{equation}\label{eq:rotationalClosedLoopStateFeedback}
    \begin{split}
        \dot{\bs{\xi}}_1 &= \bs{\xi}_2,\\
        \dot{\bs{\xi}}_2 &= -\beta_1\bs{\xi}_1 - \beta_2\bs{\xi}_2.
    \end{split}
\end{equation}

\subsection{Translational Control}
The translational control uses the total thrust, $u_f$, as the direct control input and the desired roll and pitch trajectories, $\phi_r$ and $\theta_r$, as virtual control inputs. The translational control will be designed in view of the potential tracking errors in roll and pitch trajectories, leading to the following modification of the translational error dynamics \eqref{eq:translationalErrorDynamics}
\begin{equation}
    \begin{split}
        \dot{\bs{\rho}}_1 &= \bs{\rho}_2, \\
        \dot{\bs{\rho}}_2 &= -\frac{u_f}{m}R_3(\bs{\theta}_r + \bs{\xi}_1) + g\bs{e}_z + \bs{\sigma}_\rho - \ddot{\bs{p}}_r.
    \end{split}
\end{equation}
The translational subsystem dynamics can be redefined in terms of the nominal translational model with the addition of an error term as follows
\begin{equation}\label{eq:translationalDynamicsWithErrorTerm}
    \begin{split}
        \dot{\bs{\rho}}_1 &= \bs{\rho}_2, \\
        \dot{\bs{\rho}}_2 &= -\frac{u_f}{m}R_3(\bs{\theta}_r) + g\bs{e}_z + \bs{\sigma}_\rho - \ddot{\bs{p}}_r + \bs{e}_\theta(t,\bs{\xi}_1),
    \end{split}
\end{equation}
where
\begin{equation*}
    \bs{e}_\theta(t,\bs{\xi}_1) = -\frac{u_f}{m}(R_3(\bs{\theta}_r + \bs{\xi}_1) - R_3(\bs{\theta}_r)).
\end{equation*}
In the case of perfect rotational tracking, i.e., $e_\theta(t,\bs{\xi}_1) = 0$, the system \eqref{eq:translationalDynamicsWithErrorTerm} can be feedback linearized in the following manner
\begin{equation}\label{eq:translationalControl}
    \begin{split}
        \phi_r &= \tan^{-1}\left( \dfrac{-f_y}{\sqrt{f_x^2 + (f_z-g)^2}} \right), \quad \psi_r = 0,\\ 
        \theta_r &= \tan^{-1}\left(\dfrac{f_x}{f_z-g}\right), \quad
        u_{f d} = -\dfrac{m(f_z-g)}{c_{\phi_r} c_{\theta_r}},
    \end{split}
\end{equation}
where $\bs{f}_t = [f_x \ f_y \ f_z]^\top \inr^3$ is the forcing function, defined as follows
\begin{equation}\label{eq:translationalForcing}
    \bs{f}_t = -\gamma_1\bs{\rho}_1 - \gamma_2\bs{\rho}_2 - \bs{\sigma}_\rho + \ddot{\bs{p}}_r,
\end{equation}
for constants $\gamma_1,\gamma_2 \inr_{>0}$. This leads to the following closed-loop translational subsystem with the inclusion of tracking error from the rotational subsystem
\begin{equation}\label{eq:translationalClosedLoopStateFeedback}
    \begin{split}
        \dot{\bs{\rho}}_1 &= \bs{\rho}_2, \\
        \dot{\bs{\rho}}_2 &= -\gamma_1\bs{\rho}_1 - \gamma_2\bs{\rho}_2 + \bs{e}_\theta(t,\bs{\xi}_1).
    \end{split}
\end{equation}
Note that the controller \eqref{eq:rotationalControl} requires the estimate $\dot{\bar{\bs{\theta}}}_r$, however, only $\bs{\theta}_r$ is given by the translational controller. The derivative of the reference trajectory $\dot{\bs{\theta}}_r$ can be computed analytically from the translational controller as
\begin{equation*}
    \begin{split}
        \dot{\phi}_r &= \frac{ f_y\left(\dot{f}_x f_x + \dot{f}_z(f_z - g)\right) - \dot{f}_y\left(f_x^2 + (f_z - g)^2\right)}{\left(f_x^2+(f_z - g)^2\right)^{1/2}\left(f_x^2 + f_y^2 + (f_z - g)^2\right)},\\
        \dot{\theta}_r &= \frac{\dot{f}_x(f_z - g) - f_x \dot{f}_z}{f_x^2 + (f_z - g)^2}, \\
        \dot{\psi}_r &= 0,
    \end{split}
\end{equation*}
where $\dot{\bs{f}}_t = \left[\dot{f}_x \ \dot{f}_y \ \dot{f}_z\right]^\top$ and
\begin{equation}\label{eq:f_tDot}
    \begin{split}
    \dot{\bs{f}}_t &= -\gamma_1\bs{\rho}_2 - \gamma_2\left[-\frac{u_f}{m}R_3(\bs{\theta}_r) + g\bs{e}_z + \bs{\sigma}_\rho - \ddot{\bs{p}}_r \right] \\
    &\quad - \dot{\bs{\sigma}}_\rho + \bs{p}_r^{(3)}.
    \end{split}
\end{equation}
Our estimate $\dot{\bar{\bs{\theta}}}_r$ is obtained by setting $\dot{\bs{\sigma}}_\rho = 0$ in the expression for $\dot{\bs{\theta}}_r$.
While the substitution \eqref{eq:f_tDot} requires the third order derivative of the translational reference, it is shown in the EHGO design that the translational reference must be fifth order differentiable. 

\subsection{Extended High-Gain Observer Design}\label{sec:EHGODesign}
A multi-input multi-output EHGO is designed similar to \cite{lee2012control,lee2016output} to estimate higher-order states of the error dynamic systems \eqref{eq:rotationalErrorDynamics} and \eqref{eq:translationalErrorDynamics}, uncertainties arising from modeling error and external disturbances, as well as the reference trajectory based on the reference system dynamics \eqref{eq:groundVehicleDynamics}. It is shown in \cite{boss2017uncertainty} that it is necessary to include actuator dynamics in the multi-rotor model for EHGO design. The actuator dynamics reside in the same time-scale as the EHGO, and therefore can not be ignored in the EHGO dynamics.

The actuators used on multi-rotor UAVs are Brushless DC (BLDC) motors, which require electronic speed controllers. 
These controllers introduce dynamic delays \cite{franchi2017adaptive} of the following form
\begin{equation}
    \tau_m \dot{\bs{\omega}} = (\bs{\omega}_\text{des} - \bs{\omega}),
\end{equation}
where $\tau_m \inr$ is the time constant of the actuator system, $\bs{\omega} \inr^n$ is a vector of angular rates of the rotors and $\bs{\omega}_\text{des} \inr^n$ is a vector of rotor angular rate control inputs. Since feedback of the rotor angular rate is not available, it can be simulated by the following system
\begin{equation}
    \tau_m\dot{\hat{\bs{\omega}}} = (\bs{\omega}_\text{des} - \hat{\bs{\omega}}),
\end{equation}
where $\hat{\bs{\omega}}\inr^n$ is a vector of simulated rotor angular rates.
The dynamics take total thrust force, $u_f$, and body-fixed torques, $\bs{\tau}$, as inputs.
The thrust and torques are generated by applying different forces with each actuator, which is a function of the rotor angular rate
\begin{equation}
    \bar{f}_i = b\omega_i^2, \quad \text{for} \ i = \{1,\dots,n\},
\end{equation}
where $b \inr$ is a constant relating angular rate to force and $\omega_i \inr$ is the $i$-th rotor angular rate. These individual actuator forces are then mapped through a matrix, $M \inr^{4\times n}$, based on the geometry of the multi-rotor aerial platform, allowing the squared rotor angular rates to be taken as the control input to the model
\begin{equation}\label{eq:inputTransformation}
    \left[\begin{matrix} u_f \\ \bs{\tau} \end{matrix}\right]= b M\bs{\omega}_s, \quad \bs{\omega}_s = \left[\omega_1^2 \dots \omega_n^2 \right]^\top.
\end{equation}
The inverse operation is used to generate the desired rotor speeds $\bs{\omega}_\text{des}$ from the feedback linearizing control signals $u_{f d}$ and $\bs{\tau}_d$. For $n>4$, the inverse of \eqref{eq:inputTransformation} is an over-determined system which admits infinitely many solutions. In this case, we focus on the minimum energy solution, in which the pseudo-inverse of $M$ is used.

The dynamics \eqref{eq:rotationalDynamics}, \eqref{eq:translationalDynamics}, and \eqref{eq:groundVehicleDynamics} can be combined into one set of equations for the observer where the state space will be extended to estimate disturbance terms. Since the third derivative of the reference trajectory is required, the reference system's dynamics will be extended to include the third derivative of its position.
\begin{equation}
    \begin{split}
    \dot{\bs{\rho}}_1 &= \bs{\rho}_2, \\
        \dot{\bs{\rho}}_2 &= -\frac{u_f}{m}R_3(\bs{\theta}_1) + g\bs{e}_z + \bs{\sigma}_\rho - \ddot{\bs{p}}_r, \\
        \dot{\bs{\sigma}}_\rho &= \varphi_\rho(t,\bs{\rho}),\\
        \dot{\bs{\xi}}_1 &= \bs{\xi}_2, \\
        \dot{\bs{\xi}}_2 &= f(\bs{\xi},\bs{\theta}_1,\dot{\bar{\bs{\theta}}}_r) + G(\bs{\theta}_1)\bs{\tau} + \bs{\varsigma}_\xi, \\
        \dot{\bs{\varsigma}}_\xi &= \varphi_\xi(t,\bs{\xi}),\\
        \dot{\bs{x}}_{c_1} &= \bs{x}_{c_2}, \\
        \dot{\bs{x}}_{c_2} &= \bs{x}_{c_3}, \\
        \dot{\bs{x}}_{c_3} &= \bs{\sigma}_{x c}, \\
        \dot{\bs{\sigma}}_{x c} &= \varphi_{x c}(t,\bs{x}_c),\\
    \end{split}
\end{equation}
where $ \bs{\sigma}_{xc} =\frac{\partial \bar{f}_c(t,\bs{x}_c)}{\partial \bs{x}_c}\dot{\bs{x}}_c$. Since the reference system dynamics may not be known, they have been replaced by the disturbance term in their entirety. The estimated reference system states will be taken as the reference trajectory.

We now define the state vectors 
\begin{equation*}
    \begin{gathered}
        \bs{q} = [\bs{\rho}_1^\top \; \bs{\rho}_2^\top \; \bs{\xi}_1^\top \; \bs{\xi}_2^\top]^\top, \quad \bs{\chi}_1 = [\bs{\rho}_1^\top \; \bs{\rho}_2^\top \; \bs{\sigma}_\rho^\top]^\top, \\
        \bs{\chi}_2 = [\bs{\xi}_1^\top \; \bs{\xi}_2^\top \; \bs{\varsigma}_\xi^\top]^\top, \quad \bs{\chi}_3 = [\bs{x}_{c_1}^\top \; \bs{x}_{c_2}^\top \; \bs{x}_{c_3}^\top \;
        \bs{\sigma}_{x c}^\top]^\top,\\
        \bs{\chi} = [\bs{\chi}_1^\top \; \bs{\chi}_2^\top \; \bs{\chi}_3^\top]^\top.
    \end{gathered}
\end{equation*}
Define $\varphi(t,\bs{q},\bs{x}_c) = \left[\varphi_\rho(t,\bs{\rho}) \ \varphi_\xi(t,\bs{\xi}) \ \varphi_{x c}(t,\bs{x}_c)\right]^\top$, which is a vector of unknown functions describing the disturbance. It is assumed $\varphi(t,\bs{q},\bs{x}_c)$ is continuous and bounded on any compact set in $\bs{q}$ and $\bs{x}_c$. Note that the second order derivative of the reference trajectory, $\ddot{\bs{\theta}}_r$, is lumped into the disturbance $\bs{\varsigma}_\xi$. Since the disturbance term must satisfy \textit{Assumption \ref{as:disturbance}}, $\ddot{\bs{\theta}}_r$ must be differentiable, therefore requiring the translational reference signal to be fifth order differentiable.
The observer system with extended states can be written compactly as
\begin{equation*}\label{eq:combinedDynamics}
    \begin{split}
        \dot{\hat{\bs{\chi}}} &= A\hat{\bs{\chi}} + B\left[\bar{f}(\hat{\bs{\chi}},\bs{\theta}_1,\dot{\bar{\bs{\theta}}}_r) + \bar{G}(\bs{\theta}_1)\hat{\bs{\omega}}_s\right] + H\hat{\bs{\chi}}_e, \\
        \hat{\bs{\chi}}_e &= C(\bs{\chi} - \hat{\bs{\chi}}),
    \end{split}
\end{equation*}
where
\begin{equation*}
    \begin{gathered}
        A = \text{diag}\left(A_i\right), \ B = \text{diag}\left(B_i\right), \\ 
        C = \text{diag}\left(C_i\right), \ H = \text{diag}\left(H_i\right),
    \end{gathered}
\end{equation*}
\begin{equation*}
    \begin{gathered}
        A_i = \left[\begin{smallmatrix} 0_3 & I_3 & 0_3 \\ 0_3 & 0_3 & I_3 \\ 0_3 & 0_3 & 0_3\end{smallmatrix}\right], \ B_i = \left[\begin{smallmatrix} 0_3\\ I_3 \\ 0_3\end{smallmatrix}\right], \ H_i = \left[\begin{smallmatrix} \alpha_1/\epsilon I_3 \\ \alpha_2/\epsilon^2 I_3 \\ \alpha_3/\epsilon^3 I_3 \end{smallmatrix}\right], \\
        C_i = \left[\begin{smallmatrix} I_3 & 0_3 & 0_3\end{smallmatrix}\right], \ \text{for} \ i \in \{1,2\}, 
    \end{gathered}
\end{equation*}
\begin{equation*}
    \begin{gathered}
        A_3 = \left[\begin{smallmatrix} 0_3 & I_3 & 0_3 & 0_3\\ 0_3 & 0_3 & I_3 & 0_3\\ 0_3 & 0_3 & 0_3 & I_3 \\ 0_3 & 0_3 & 0_3 & 0_3\end{smallmatrix}\right], \ B_3 = \left[\begin{smallmatrix} 0_3 \\ 0_3 \\ 0_3 \\ 0_3\end{smallmatrix}\right], \ H_3 = \left[\begin{smallmatrix} \alpha_1/\epsilon I_3 \\ \alpha_2/\epsilon^2 I_3 \\ \alpha_3/\epsilon^3 I_3 \\ \alpha_4/\epsilon^4 I_3\end{smallmatrix}\right], \\
        C_3 = \left[\begin{smallmatrix} I_3 & 0_3 & 0_3 & 0_3\end{smallmatrix}\right],
    \end{gathered}
\end{equation*}
\begin{equation*}
    \bar{f}(\hat{\bs{\chi}},\bs{\theta}_1,\dot{\bar{\bs{\theta}}}_r) = \left[\begin{smallmatrix} g\bs{e}_z - \ddot{\bs{p}}_r \\ f(\hat{\bs{\xi}},\bs{\theta}_1,\dot{\bar{\bs{\theta}}}_r) \\ 0_{3\times1}\end{smallmatrix}\right], \ \bar{G}(\bs{\theta}_1) = b\left[\begin{smallmatrix} \frac{-R_3(\bs{\theta}_1)}{m} & 0_{3} \\ 0_{3\times1} & G(\bs{\theta}_1) \\ 0_{3\times1} & 0_3 \end{smallmatrix}\right]M,
\end{equation*}
where $H$ is designed by choosing $\alpha_j^i$ such that
\begin{equation}
    s^{\varrho_i} + \alpha_1^i s^{\varrho_i-1} + \dots + \alpha_{\varrho_i-1}^i s + \alpha_{\varrho_i},
\end{equation}
is Hurwitz and $[\varrho_1 \ \varrho_2 \ \varrho_3]^\top = [3 \ 3 \ 4]^\top$.

\subsection{Output Feedback Control}
The state feedback controllers \eqref{eq:rotationalControl} and \eqref{eq:translationalControl} are rewritten as output feedback controllers
\begin{equation}\label{eq:rotationalOutputFeedback}
    \hat{\bs{\tau}}_d = G^{-1}(\bs{\theta}_1)\left[\hat{\bs{f}}_r - f(\hat{\bs{\xi}},\bs{\theta}_1,\dot{\bar{\bs{\theta}}}_r)\right],
\end{equation}
where $\hat{\bs{f}}_r = -\beta_1\hat{\bs{\xi}}_1 - \beta_2\hat{\bs{\xi}}_2 - \hat{\bs{\varsigma}}_\xi$ and
\begin{equation}\label{eq:translationalOutputFeedback}
    \begin{split}
        \hat{\phi}_r &= \tan^{-1}\left( \dfrac{-\hat{f}_y}{\sqrt{\hat{f}_x^2 + (\hat{f}_z-g)^2}} \right), \quad \hat{\psi}_r = 0, \\
        \hat{\theta}_r &= \tan^{-1}\left(\dfrac{\hat{f}_x}{\hat{f}_z-g}\right), \quad \hat{u}_{f d} = -\dfrac{m(\hat{f}_z-g)}{c_{\hat{\phi}_r} c_{\hat{\theta}_r}},
    \end{split}
\end{equation}
where $\hat{\bs{f}}_t$ is defined using tracking error estimates
\begin{equation}\label{eq:translationalForcingOF}
    \hat{\bs{f}}_t = -\gamma_1\hat{\bs{\rho}}_1 - \gamma_2\hat{\bs{\rho}}_2 - \hat{\bs{\sigma}}_\rho + \hat{\bs{x}}_{c_3}.
\end{equation}
To be used for output feedback control, the estimates must be saturated to overcome the peaking phenomenon, see \textit{Remark \ref{rmk:peaking}}. The following saturation function is used to saturate each estimate individually
\begin{equation*}
    \hat{\chi}_{is} = k_{\chi_i}\operatorname{sat}\left(\frac{\hat{\chi}_i}{k_{\chi_i}} \right), \ \  \operatorname{sat}(y)=\left\{\begin{array}{ll}{y,} & {\text { if }|y| \leq 1}, \\ {\operatorname{sign}(y),} & {\text { if }|y|>1},\end{array}\right.
\end{equation*}
for $1 \leq i \leq 30$, where the saturation bounds $k_{\chi_i}$ are chosen such that the saturation functions will not be invoked under state feedback. These saturated estimates are used in the output feedback controllers \eqref{eq:rotationalOutputFeedback} and \eqref{eq:translationalOutputFeedback}.

\section{Stability Analysis}\label{sec:stabilityAnalysis}
The domain of operation will now be restricted and the stability of the state feedback control, observer estimates, and output feedback control will now be proven.
\subsection{Restricting Domain of Operation}
In order to ensure that the rotational feedback linearizing control law remains well defined, the domain of operation must be restricted leading to the following assumption
\begin{assumption}\label{as:rotationalRefSet}
The rotational reference signals remain in the set $\{|\phi_r| < \frac{\pi}{2} - \delta, |\theta_r| < \frac{\pi}{2} - \delta \}$, where $0<\delta<\frac{\pi}{2}$.
\end{assumption}

We will now establish that for sufficiently small initial tracking error, $\bs{\xi}(0)$, the tracking error $\norm{\bs{\xi}_1(t)}<\delta$ for all $t>0$. Consequently the system will operate away from any Euler angle singularities. A Lyapunov function in the rotational error dynamics is taken as
\begin{equation}\label{eq:v_xi}
    V_\xi = \bs{\xi}^\top P_\xi\bs{\xi}, \quad \text{where} \ P_\xi A_\xi + A_\xi^\top P_\xi = -I_6,
\end{equation}
\begin{equation*}
    A_\xi = \left[\begin{smallmatrix} 0_3 & I_3 \\ -\beta_1I_3 & -\beta_2I_3 \end{smallmatrix}\right].
\end{equation*}
A Lyapunov function in the translational error dynamics is taken as
\begin{equation}\label{eq:v_rho}
    V_\rho = \bs{\rho}^\top P_\rho\bs{\rho}, \quad \text{where} \ P_\rho A_\rho + A_\rho^\top P_\rho = -I_6,
\end{equation}
\begin{equation*}
    A_\rho = \left[\begin{smallmatrix} 0_3 & I_3 \\ -\gamma_1I_3 & -\gamma_2I_3 \end{smallmatrix}\right].
\end{equation*}
Define the following two positive constants $c_\rho,\rho_\text{max}\inr_{>0}$. Let the positive constant $c_\xi\inr_{>0}$ be chosen such that $c_\xi < (\beta_1 + 1)\delta^2/(2\beta_2)$.
\begin{lemma}[Restricting Domain of Operation]\label{lem:restrictingDomainOfAnalysis}
    For the feedback linearized rotational error dynamics \eqref{eq:rotationalClosedLoopStateFeedback} with initial conditions $\bs{\xi}(0)$ in the set $\Omega_\xi = \{V_\xi \leq c_\xi\}$ the system state $\bs{\xi}(t)$ remains in the set $\norm{\bs{\xi}_1(t)}<\delta$ for all $t>0$. Similarly, the feedback linearized translational error dynamics \eqref{eq:translationalClosedLoopStateFeedback} with initial conditions $\bs{\rho}(0)$ in the set $\Omega_\rho = \{V_\rho \leq c_\rho\}$ the system state $\bs{\rho}(t)$ remains in the set $\norm{\bs{\rho}(t)}\leq\rho_{\max}$ for all $t>0$.
\end{lemma}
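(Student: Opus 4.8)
The plan is to certify positive invariance of the two sublevel sets using the quadratic forms $V_\xi$ and $V_\rho$, treating the (autonomous) rotational subsystem first. Since $\beta_1,\beta_2>0$ the matrix $A_\xi$ is Hurwitz, so \eqref{eq:v_xi} has a unique $P_\xi\succ0$, and along \eqref{eq:rotationalClosedLoopStateFeedback} one has $\dot V_\xi=\bs\xi^\top(P_\xi A_\xi+A_\xi^\top P_\xi)\bs\xi=-\norm{\bs\xi}^2\le0$; hence $V_\xi(\bs\xi(t))\le V_\xi(\bs\xi(0))\le c_\xi$ for all $t\ge0$, i.e.\ $\Omega_\xi$ is positively invariant. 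It remains to convert this into a bound on $\bs\xi_1$ alone. Because every block of $A_\xi$ is a scalar multiple of $I_3$, the Lyapunov equation decouples into the $2\times2$ scalar problem for $\left[\begin{smallmatrix}0&1\\-\beta_1&-\beta_2\end{smallmatrix}\right]$, which I would solve explicitly; minimizing the resulting quadratic form over $\bs\xi_2$ with $\bs\xi_1$ held fixed (a Schur-complement computation) yields $V_\xi\ge\frac{\beta_1+1}{2\beta_2}\norm{\bs\xi_1}^2$. Combined with $V_\xi(\bs\xi(t))\le c_\xi<\frac{(\beta_1+1)\delta^2}{2\beta_2}$, this gives $\norm{\bs\xi_1(t)}<\delta$ for all $t>0$, the desired claim; together with Assumption~\ref{as:rotationalRefSet} this keeps the orientation away from the Euler-angle singularities, so that $\Psi$ and $G(\bs\theta_1)$ remain invertible along the trajectory.

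For the translational subsystem, \eqref{eq:translationalClosedLoopStateFeedback} is non-autonomous because it is driven by the rotational error through $\bs e_\theta(t,\bs\xi_1)=-\frac{u_f}{m}\big(R_3(\bs\theta_r+\bs\xi_1)-R_3(\bs\theta_r)\big)$. I would compute $\dot V_\rho=-\norm{\bs\rho}^2+2\bs\rho^\top P_\rho[0_3\ I_3]^\top\bs e_\theta\le-\norm{\bs\rho}^2+2\norm{P_\rho}\norm{\bs\rho}\norm{\bs e_\theta}$. Since $R_3$ is globally Lipschitz (its entries are sums of products of sines and cosines), $\norm{\bs e_\theta}\le\frac{|u_f|}{m}L\norm{\bs\xi_1}$; and on $\{\norm{\bs\rho}\le\rho_{\max}\}$ the thrust $u_f=u_{fd}$ from \eqref{eq:translationalControl} is bounded by a constant $\bar u_f$, since Assumption~\ref{as:rotationalRefSet} keeps $c_{\phi_r}c_{\theta_r}\ge\sin^2\delta>0$ while $\bs\rho$, $\bs\sigma_\rho$ and $\ddot{\bs p}_r$ stay bounded there (the latter two by Assumption~\ref{as:disturbance}). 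Hence on that set $\dot V_\rho\le-\norm{\bs\rho}^2+\frac{2\norm{P_\rho}\bar u_f L}{m}\norm{\bs\xi_1}\norm{\bs\rho}$, and since $\norm{\bs\xi_1}<\delta$ (shrinking $c_\xi$ further if needed) one arranges $\dot V_\rho<0$ on the boundary $\{V_\rho=c_\rho\}$ of a sublevel set contained in $\{\norm{\bs\rho}\le\rho_{\max}\}$; this makes $\Omega_\rho$ positively invariant, so $\bs\rho(0)\in\Omega_\rho$ gives $\norm{\bs\rho(t)}\le\rho_{\max}$ for all $t>0$.

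The step I expect to be the main obstacle is the apparent circularity in the translational estimate: bounding $\bs e_\theta$ requires a bound on $u_f$, which depends on $\bs\rho$, whose boundedness is precisely what is being proved. I would resolve this by a standard bootstrap argument — set $T=\sup\{t\ge0:\norm{\bs\rho(s)}\le\rho_{\max}\ \text{for all}\ s\in[0,t]\}$, observe that all the above estimates are valid on $[0,T)$, so that $V_\rho$ is nonincreasing whenever $\bs\rho$ reaches the boundary of $\Omega_\rho$, which prevents exit and forces $T=\infty$ — or, equivalently, by fixing the constants in the order $\rho_{\max}$, then $c_\rho$ (small enough that $\{V_\rho\le c_\rho\}\subseteq\{\norm{\bs\rho}\le\rho_{\max}\}$ and $\dot V_\rho<0$ on its boundary), then $c_\xi$ (small enough that the cross term cannot dominate). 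One should also keep in mind that $\bar u_f$ grows like $1/\sin^2\delta$, so $\delta$ cannot be taken arbitrarily small relative to the other gains.
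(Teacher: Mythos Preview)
Your approach is essentially the paper's. For the rotational block, the paper solves the $2\times2$ Lyapunov equation explicitly and writes $V_\xi$ as a sum of nonnegative terms, the first of which is $\frac{\beta_1+1}{2\beta_2}\|\bs\xi_1\|^2$; dropping the rest gives exactly the bound you obtain, and the invariance of $\Omega_\xi$ via $\dot V_\xi=-\|\bs\xi\|^2$ is identical. For the translational block, the paper likewise bounds $\dot V_\rho\le-\|\bs\rho\|^2+2L_e\delta\,\lambda_{\max}(P_\rho)\|\bs\rho\|$ by declaring $e_\theta(t,\bs\xi_1)$ Lipschitz in $\bs\xi_1$ with a constant $L_e$, then sets $\rho_{\max}=2L_e\delta\,\lambda_{\max}(P_\rho)$ and $c_\rho=\lambda_{\max}(P_\rho)\rho_{\max}^2$.

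The one substantive difference is the circularity you flag: the paper simply asserts that $e_\theta$ is ``uniformly bounded in time'' and therefore Lipschitz on $\Omega_\xi$, without acknowledging that $u_f$ depends on $\bs\rho$ through \eqref{eq:translationalControl}--\eqref{eq:translationalForcing}. Your bootstrap (or the equivalent ordering $\rho_{\max}\to c_\rho\to c_\xi$) is precisely what is needed to make that step honest, and is not in the paper; in that respect your argument is more careful than the published one. Conversely, the paper is slightly more explicit in that it \emph{defines} $\rho_{\max}$ and $c_\rho$ from $L_e$ and $\delta$ rather than leaving them as free constants to be chosen compatibly, so you may wish to close your argument by exhibiting such a choice.
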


\begin{proof}
Solving $P_\xi A_\xi + A_\xi^\top P_\xi = -I_6$ for $P_\xi$, the Lyapunov function \eqref{eq:v_xi} is
\begin{equation*}
    \small
    V_\xi = \frac{(\beta_1 + 1)\bs{\xi}_1^\top \bs{\xi}_1}{2\beta_2} + \frac{\beta_1\bs{\xi}_2^\top \bs{\xi}_2 + (\beta_2\bs{\xi}_1 + \bs{\xi}_2)^\top(\beta_2\bs{\xi}_1 + \bs{\xi}_2)}{2\beta_1\beta_2}.
\end{equation*}
Taking the bound on the Lyapunov function
\begin{equation*}
    V_\xi \leq c_\xi \Rightarrow \frac{(\beta_1 + 1)\bs{\xi}_1^\top\bs{\xi}_1}{2\beta_2} \leq c_\xi,
\end{equation*}
and choosing $c_\xi$ in the following manner
\begin{equation*}
    c_\xi < \frac{(\beta_1 + 1)\delta^2}{2\beta_2} \Rightarrow \norm{\bs{\xi}_1(t)} < \delta,
\end{equation*}
over the set $\Omega_\xi$. The Lyapunov function \eqref{eq:v_xi} also satisfies the following inequalities
\begin{equation*}
    \lambda_{\min} (P_\xi)\norm{\bs{\xi}}^2 \leq V_\xi \leq \lambda_{\max} (P_\xi)\norm{\bs{\xi}}^2, \quad \dot{V}_\xi \leq -\norm{\bs{\xi}}^2,
\end{equation*}
showing that $\Omega_\xi$ is positively invariant, which ensures Euler angle singularities will be avoided so long as the reference signals of the rotational subsystem remain in the set of \textit{Assumption \ref{as:rotationalRefSet}}.

In view of potential rotational tracking errors, the Lyapunov function \eqref{eq:v_rho} satisfies the following inequalities
\begin{equation*}
    \begin{gathered}
        \lambda_{\min} (P_\rho)\norm{\bs{\rho}}^2 \leq V_\rho \leq \lambda_{\max} (P_\rho)\norm{\bs{\rho}}^2,\\
        \dot{V}_\rho \leq -\norm{\bs{\rho}}^2 + 2[0 \; e_\theta(t,\bs{\xi}_1)^\top]^\top P_\rho \bs{\rho}.
    \end{gathered}
\end{equation*}
Since $e_\theta(t,\bs{\xi}_1)$ and its partial derivatives are continuous on $\Omega_\xi$, and $e_\theta$ is uniformly bounded in time, $e_\theta$ is locally Lipschitz in $\bs{\xi}_1$ on $\Omega_\xi$. We can now define
\begin{equation*}
    \norm{e_\theta(t,\bs{\xi}_1) - e_\theta(t,0)} \leq L_e\norm{\bs{\xi}_1} \leq L_e \delta,
\end{equation*}
for the Lipschitz constant, $L_e$. We can then bound the Lyapunov function derivative by
\begin{equation*}
    \dot{V}_\rho \leq -\norm{\bs{\rho}}^2 + 2 L_e \delta \lambda_\text{max}(P_\rho)\norm{\bs{\rho}}.
\end{equation*}
Define $\rho_\text{max} = 2 L_e \delta \lambda_\text{max}(P_\rho)$, for $\norm{\bs{\rho}} > \rho_\text{max}$, $\dot{V}_\rho < 0$. Since $V_\rho < \lambda_\text{max}(P_\rho)\norm{\bs{\rho}}^2$ we can choose
\begin{equation}
    c_\rho = \lambda_\text{max}(P_\rho)(2 L_e \delta \lambda_\text{max}(P_\rho))^2.
\end{equation}
By this choice, $\norm{\bs{\rho}(t)} \leq \rho_\text{max}$ for all $t>0$ and $\Omega_\rho$ is compact and positively invariant. The domain of operation is now defined as the set $\Omega_q = \Omega_\xi \times \Omega_\rho$.
\end{proof}

\begin{corollary}\label{cor:rotationalSet}
By \textit{Lemma \ref{lem:restrictingDomainOfAnalysis}} and \textit{Assumption \ref{as:rotationalRefSet}}, the rotational states remain in the set $\{|\phi| < \frac{\pi}{2}, |\theta| < \frac{\pi}{2}, |\dot{\phi}| < a_\theta, |\dot{\theta}| < a_\theta \}$, where $a_\theta$ is some positive constant. Thereby ensuring singularities in the Euler angles are avoided and the feedback linearizing controller \eqref{eq:rotationalControl} remains well defined.
\end{corollary}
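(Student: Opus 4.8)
The plan is to propagate the bounds from \textit{Lemma \ref{lem:restrictingDomainOfAnalysis}} through the changes of variables $\bs{\theta}_1 = \bs{\xi}_1 + \bs{\theta}_r$ and $\bs{\theta}_2 = \bs{\xi}_2 + \dot{\bs{\theta}}_r$, using \textit{Assumption \ref{as:rotationalRefSet}} to control the reference part and the compactness of $\Omega_q$ to control everything else. First I would record the consequences of \textit{Lemma \ref{lem:restrictingDomainOfAnalysis}}: since $\Omega_\xi = \{V_\xi \leq c_\xi\}$ is positively invariant, $\bs{\xi}(t) \in \Omega_\xi$ for all $t \geq 0$, and from $\lambda_{\min}(P_\xi)\norm{\bs{\xi}}^2 \leq V_\xi \leq c_\xi$ we obtain $\norm{\bs{\xi}_1(t)} < \delta$ and $\norm{\bs{\xi}_2(t)} \leq \sqrt{c_\xi/\lambda_{\min}(P_\xi)} =: b_\xi$; likewise $\norm{\bs{\rho}(t)} \leq \rho_{\max}$. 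Since the Euclidean norm dominates each component, $\norm{\bs{\xi}_1}<\delta$ gives $|\xi_{1,\phi}|, |\xi_{1,\theta}| < \delta$.

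Next, for the angles themselves I would write $|\phi| \leq |\phi_r| + |\xi_{1,\phi}| < (\tfrac{\pi}{2} - \delta) + \delta = \tfrac{\pi}{2}$ by \textit{Assumption \ref{as:rotationalRefSet}}, and identically $|\theta| < \tfrac{\pi}{2}$. In particular $c_\theta \neq 0$, so $\Psi$, $\Psi^{-1}$, $G(\bs{\theta}_1) = \Psi J^{-1}$ and $G^{-1}(\bs{\theta}_1)$ are well defined and continuous along the trajectory, which is the singularity-avoidance claim.

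The step I expect to be the crux is bounding $\dot{\bs{\theta}}_r$, since then $\norm{\bs{\theta}_2} = \norm{\bs{\xi}_2 + \dot{\bs{\theta}}_r} \leq b_\xi + \norm{\dot{\bs{\theta}}_r}$ is bounded. Here I would invoke the closed-form expressions for $\dot{\phi}_r$ and $\dot{\theta}_r$ in terms of $\bs{f}_t$ and $\dot{\bs{f}}_t$. On $\Omega_q$ the quantities $\bs{\rho}$, $\bs{\sigma}_\rho$, $\ddot{\bs{p}}_r$, $\bs{p}_r^{(3)}$, and $u_f$ are all bounded (the first by \textit{Lemma \ref{lem:restrictingDomainOfAnalysis}}, the disturbance terms by \textit{Assumption \ref{as:disturbance}}, the reference derivatives by the assumed smoothness and boundedness of $\bs{p}_r$, and $u_f$ because it is produced by the thrust law evaluated on bounded arguments), so $\bs{f}_t$ and $\dot{\bs{f}}_t$ are bounded. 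The only danger is the vanishing of the denominators $f_x^2 + (f_z - g)^2$ and $f_x^2 + f_y^2 + (f_z - g)^2$; but \textit{Assumption \ref{as:rotationalRefSet}} keeps $\theta_r = \tan^{-1}(f_x/(f_z-g))$ (and $\phi_r$) strictly inside $(-\tfrac{\pi}{2},\tfrac{\pi}{2})$, which together with the boundedness of $\bs{f}_t$ forces $|f_z - g|$, and hence both denominators, to stay bounded away from zero. Therefore $\norm{\dot{\bs{\theta}}_r} \leq b_r$ for some constant $b_r$, and taking $a_\theta = b_\xi + b_r$ yields $|\dot{\phi}|, |\dot{\theta}| < a_\theta$.

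Finally I would assemble these facts: $\bs{\theta}_1$ and $\bs{\theta}_2$ remain in the stated compact set, on which $\Psi$ and $\Psi^{-1}$ are continuous and nonsingular, so the Euler-angle representation has no singularity and the feedback linearizing controller \eqref{eq:rotationalControl} is well defined for all $t > 0$. The main obstacle, as noted, is the uniform lower bound on the denominators appearing in $\dot{\bs{\theta}}_r$; I would dispatch it either by making the (mild) boundedness hypotheses on $\bs{\sigma}_\rho$, $\bs{p}_r$ and the resulting $u_f$ explicit on $\Omega_q$, or by strengthening \textit{Assumption \ref{as:rotationalRefSet}} to bound $\norm{\dot{\bs{\theta}}_r}$ directly.
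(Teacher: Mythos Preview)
The paper does not give a proof of this corollary; it is stated immediately after \textit{Lemma~\ref{lem:restrictingDomainOfAnalysis}} as a direct consequence of that lemma and \textit{Assumption~\ref{as:rotationalRefSet}}, with no further argument. Your proposal therefore supplies strictly more detail than the paper does, and the overall line --- triangle inequality for the angles, then bounding $\dot{\bs{\theta}}_r$ on the compact set $\Omega_q$ to handle the rates --- is exactly the content the authors leave implicit.

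One small caution on the step you flag as the crux. The inference ``$|\theta_r|,|\phi_r|<\tfrac{\pi}{2}-\delta$ together with boundedness of $\bs{f}_t$ forces $|f_z-g|$ bounded away from zero'' is not quite valid as written: those angle bounds only give $|f_x|\leq \cot(\delta)\,|f_z-g|$ and $|f_y|\leq \cot(\delta)\sqrt{f_x^2+(f_z-g)^2}$, so they permit $f_x,f_y,f_z-g$ to vanish simultaneously, in which case the denominators in $\dot{\phi}_r,\dot{\theta}_r$ collapse. What actually rules this out is a lower bound on $u_{fd}=-m(f_z-g)/(c_{\phi_r}c_{\theta_r})$, i.e., a physically motivated positive-thrust condition, or equivalently one of the extra hypotheses you yourself propose at the end. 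Since the paper is silent on this point as well, your proposed fixes are a reasonable way to close a gap that the paper simply does not address.
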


\subsection{Stability Under State Feedback}\label{subsec:stabilityUnderStateFeedback}
\begin{theorem}[Stability Under State Feedback]\label{thm:SFStability}
    For the closed-loop state feedback rotational and translational subsystems,  \eqref{eq:rotationalClosedLoopStateFeedback} and \eqref{eq:translationalClosedLoopStateFeedback}, in the presence of rotational tracking error, i.e., $\bs{e}_\theta(t,\bs{\xi}_1) \neq 0$, whose initial conditions $(\bs{\xi}(0),\bs{\rho}(0))$ start in $\Omega_q$, the system states $\bs{\xi}(t)$ and $\bs{\rho}(t)$ will remain in $\Omega_q$ for all $t>0$. The states will exponentially converge to the origin.
\end{theorem}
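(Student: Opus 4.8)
The plan is to exploit the cascade structure of the closed loop: by~\eqref{eq:rotationalClosedLoopStateFeedback} the rotational error dynamics are autonomous, linear, and governed by the Hurwitz matrix $A_\xi$, while by~\eqref{eq:translationalClosedLoopStateFeedback} the translational error dynamics form a linear system with Hurwitz matrix $A_\rho$ driven only by the interconnection term $\bs{e}_\theta(t,\bs{\xi}_1)$; crucially, nothing from the translational loop feeds back into the rotational loop. I would therefore first record that $\Omega_q=\Omega_\xi\times\Omega_\rho$ is positively invariant, and then prove exponential convergence by propagating the exponential decay of $\bs{\xi}$ through $\bs{e}_\theta$ and into a scalar comparison inequality for $V_\rho$.

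For invariance, essentially everything is already contained in \textit{Lemma~\ref{lem:restrictingDomainOfAnalysis}}: the bound $\dot{V}_\xi\le-\norm{\bs{\xi}}^2$ makes $\Omega_\xi$ positively invariant, and the derivative estimate $\dot{V}_\rho\le-\norm{\bs{\rho}}^2+2L_e\delta\lambda_{\max}(P_\rho)\norm{\bs{\rho}}$ --- which already accounts for the rotational tracking error through $\bs{e}_\theta$ and uses only $\norm{\bs{\xi}_1}\le\delta$, valid on $\Omega_\xi$ --- shows that $\dot{V}_\rho\le0$ on $\{V_\rho=c_\rho\}$ for the choice $c_\rho=\lambda_{\max}(P_\rho)\big(2L_e\delta\lambda_{\max}(P_\rho)\big)^2$. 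Hence a trajectory starting in $\Omega_q$ cannot leave it, and since $\Omega_\xi$ is invariant the Lipschitz bound on $\bs{e}_\theta$ from \textit{Lemma~\ref{lem:restrictingDomainOfAnalysis}} is in force for all $t>0$.

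For convergence, since $A_\xi$ is Hurwitz, $\bs{\xi}(t)=e^{A_\xi t}\bs{\xi}(0)\to0$ exponentially; equivalently, $\dot{V}_\xi\le-\norm{\bs{\xi}}^2\le-V_\xi/\lambda_{\max}(P_\xi)$ gives $V_\xi(t)\le V_\xi(0)e^{-t/\lambda_{\max}(P_\xi)}$, so $\norm{\bs{\xi}_1(t)}\le\kappa_1 e^{-\lambda_1 t}$ for some $\kappa_1,\lambda_1>0$. Then $\norm{\bs{e}_\theta(t,\bs{\xi}_1(t))}=\norm{\bs{e}_\theta(t,\bs{\xi}_1(t))-\bs{e}_\theta(t,0)}\le L_e\norm{\bs{\xi}_1(t)}\le L_e\kappa_1 e^{-\lambda_1 t}$, and combining this with $\dot{V}_\rho\le-\norm{\bs{\rho}}^2+2\lambda_{\max}(P_\rho)\norm{\bs{e}_\theta}\,\norm{\bs{\rho}}$, $\norm{\bs{\rho}}^2\ge V_\rho/\lambda_{\max}(P_\rho)$, and $\norm{\bs{\rho}}\le\sqrt{V_\rho/\lambda_{\min}(P_\rho)}$ yields, with $W_\rho=\sqrt{V_\rho}$, a linear scalar inequality $\dot{W}_\rho\le-aW_\rho+be^{-\lambda_1 t}$ for constants $a,b>0$. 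The comparison lemma then gives $W_\rho(t)\le W_\rho(0)e^{-at/2}+\tfrac{b}{2}\int_0^t e^{-a(t-s)/2}e^{-\lambda_1 s}\,ds$, which decays exponentially, so $\bs{\rho}(t)\to0$ exponentially; together with the decay of $\bs{\xi}$ this gives exponential convergence of $(\bs{\xi}(t),\bs{\rho}(t))$ to the origin. Alternatively, one may invoke a standard result on exponentially stable cascades, reducing the argument to verifying its hypotheses on the compact invariant set $\Omega_q$.

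I expect the only real subtlety to be bookkeeping rather than mathematics: ensuring the Lipschitz estimate for $\bs{e}_\theta$ is applied only where it is valid (on $\Omega_\xi$, which is guaranteed invariant), and handling the rate matching in the comparison integral --- if $a/2=\lambda_1$ the bound picks up an extra factor $t$, still dominated by any exponential with rate slightly below $\lambda_1$. Everything else follows directly from \textit{Lemma~\ref{lem:restrictingDomainOfAnalysis}} and the linear, Hurwitz structure of the two closed loops.
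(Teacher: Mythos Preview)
Your argument is correct, but your primary route differs from the paper's. The paper writes the closed loop as a cascade and applies the composite Lyapunov approach of the Appendix directly: it takes $V_{sf}=d_1V_\rho+V_\xi$ and shows that for $d_1>0$ small enough the cross term $2d_1\lambda_{\max}(P_\rho)L_e\norm{\bs{\xi}}\norm{\bs{\rho}}$ is dominated, yielding a negative definite quadratic form in $(\norm{\bs{\rho}},\norm{\bs{\xi}})$ and hence exponential stability and invariance of $\Omega_q$ in one stroke. Your main line instead works sequentially: decay of $\bs{\xi}$ first, then push that decay through the Lipschitz bound on $\bs{e}_\theta$ and into a scalar comparison inequality for $W_\rho=\sqrt{V_\rho}$. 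Both are standard and valid; your approach is a bit more elementary and yields explicit rate constants without having to tune $d_1$, while the paper's composite Lyapunov function has the advantage of producing a single $V_{sf}$ that can be reused as the slow-subsystem Lyapunov function in the later singular-perturbation argument (Theorem~\ref{OFStability}). Your closing remark about ``invoking a standard result on exponentially stable cascades'' is in fact exactly what the paper does. One small cosmetic point: the comparison-lemma bound for $\dot{W}_\rho\le-aW_\rho+be^{-\lambda_1 t}$ gives $W_\rho(t)\le W_\rho(0)e^{-at}+b\int_0^t e^{-a(t-s)}e^{-\lambda_1 s}\,ds$, not the version with $a/2$; this does not affect your conclusion.
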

\begin{proof}
The translational and rotational closed-loop systems can be written as a cascaded system in the following form
\begin{equation*}\label{eq:cascadedSystem}
    \begin{aligned}[c]
        \dot{\bs{\rho}}_1 &= \bs{\rho}_2, \\
        \dot{\bs{\rho}}_2 &= -\gamma_1\bs{\rho}_1 - \gamma_2\bs{\rho}_2 + \bs{e}_\theta(t,\bs{\xi}_1),\\
        \dot{\bs{\xi}}_1 &= \bs{\xi}_2,\\
        \dot{\bs{\xi}}_2 &= -\beta_1\bs{\xi}_1 - \beta_2\bs{\xi}_2,
    \end{aligned}
    \ \Rightarrow \
    \begin{aligned}
        \dot{\bs{\rho}}_1 &= \bs{\rho}_2,\\
        \dot{\bs{\rho}}_2 &= f_1(t,\bs{\rho},\bs{\xi}),\\
        \dot{\bs{\xi}}_1 &= \bs{\xi}_2,\\
        \dot{\bs{\xi}}_2 &= f_2(\bs{\xi}).
    \end{aligned}
\end{equation*}
Taking the Lyapunov functions for the rotational and translational subsystems, \eqref{eq:v_xi} and \eqref{eq:v_rho}, a composite Lyapunov function can be written
\begin{equation}
    V_{sf} = d_1V_\rho + V_\xi, \quad d_1>0.
\end{equation}
It can be shown following the generalized proof in the Appendix that for $d_1$ small enough, the entire closed-loop state feedback system converges exponentially to the origin, and the set $\Omega_q$ is compact and positively invariant.
\end{proof}

\subsection{Convergence of Observer Estimates}\label{subsec:convergenceOfObserverEstimates}
The scaled error dynamics of the EHGO are written by making the following change of variables
\begin{equation}
    \begin{gathered}
        \bs{\eta}_j^i = \frac{(\bs{\chi}_j^i - \hat{\bs{\chi}}_j^i)}{\epsilon^{\varrho_i-j}}, \quad \tilde{\bs{\omega}}_s = \bs{\omega}_s - \hat{\bs{\omega}}_s,
    \end{gathered}
\end{equation}
where $\chi_j^i$ is the $j$-th element of $\bs{\chi}_i$ for $1\leq i \leq 3$ and $1\leq j \leq \varrho_i$, ${\hat{\chi}_j^i}$ is the estimate of $\chi_j^i$ obtained using the EGHO and $\hat{\bs{\omega}}_s = \left[\hat{\omega}_1^2 \dots \hat{\omega}_n^2 \right]^\top$. In the new variables, the scaled EHGO estimation error dynamics become
\begin{equation}\label{eq:fullObserverErrorDynamics}
    \begin{split}
        \epsilon\dot{\bs{\eta}}^i &= F_i\bs{\eta}^i + B_1^i\left[\Delta \bar{f}^i + \bar{G}^i(\bs{\theta}_1)\tilde{\bs{\omega}}_s\right]\\
        &\quad + \epsilon B_2^i\left[\varphi^i(t,\bs{q},\bs{x}_c)\right],
    \end{split}
\end{equation}
where $\Delta \bar{f}^i = \bar{f}^i(\bs{\chi},\bs{\theta}_1,\dot{\bar{\bs{\theta}}}_r) - \bar{f}^i(\hat{\bs{\chi}},\bs{\theta}_1,\dot{\bar{\bs{\theta}}}_r)$ and $\bar{f}^i$, $\bar{G}^i$, and $\varphi^i$ correspond to rows $3i-2$ to $3i$ of $\bar{f}$, $\bar{G}$, and $\varphi$ respectively. Note \eqref{eq:fullObserverErrorDynamics} is an $O(\epsilon)$ perturbation of
\begin{equation}\label{eq:simpleEHGOSystem}
    \epsilon\dot{\bs{\eta}}^i = F_i\bs{\eta}^i + B_1^i\left[\Delta \bar{f}^i + \bar{G}^i(\bs{\theta}_1)\tilde{\bs{\omega}}_s\right],
\end{equation}
where
\begin{equation*}
    F_{i}=\left[\begin{smallmatrix}{-\alpha_{1}^{i}}I_3 & {I_3} & {\cdots} & {0_3} \\ {\vdots} & {} & {\ddots} & {\vdots} \\ {-\alpha^i_{\varrho_i-1}}I_3 & {0_3} & {\cdots} & {I_3} \\ {-\alpha^i_{\varrho_i}}I_3 & {0_3} & {\cdots} & {0_3}\end{smallmatrix}\right], \quad B_1^i = \left[\begin{smallmatrix} 0_3 \\ \vdots \\ I_3 \\ 0_3 \end{smallmatrix}\right],
\end{equation*}
\begin{equation*}
    B_2^i = \left[0_3 \ \cdots \ 0_3 \ I_3\right]^\top, \quad \bs{\eta}^i = \left[{\bs{\eta}_1^i}^\top \ \cdots \ {\bs{\eta}_{\varrho_i}^i}^\top\right]^\top.
\end{equation*}

The actuator error dynamics in terms of the error in squared rotor angular rate, $\tilde{\bs{\omega}}_s$, and rotor angular rate error, $\tilde{\bs{\omega}} = \bs{\omega} - \hat{\bs{\omega}}$ can be written as
\begin{equation}\label{eq:actuatorErrorDynamics}
    \begin{split}
        \tau_m\dot{\tilde{\bs{\omega}}}_s &= -2\tilde{\bs{\omega}}_s + 2W_\text{des}\tilde{\bs{\omega}}, \\
        \tau_m\dot{\tilde{\bs{\omega}}} &= -\tilde{\bs{\omega}}.
    \end{split}
\end{equation}
Although $W_\text{des} = \text{diag}[\bs{\omega}_\text{des}] \inr^{n\times n}$ is time-varying, by exploiting the fact that $W_\text{des}$ is bounded, the actuator error dynamics can be analyzed as a cascaded system.
\begin{lemma}[Stability of Actuator Dynamics] 
    For bounded input $\bs{\omega}_{\operatorname{des}}$, i.e., $\omega_{\operatorname{des}_i} \leq \omega_{\max}$ for $i\in\{1,\dots,n\}$, where $\omega_{\max}$ is some positive constant, the actuator error dynamics \eqref{eq:actuatorErrorDynamics} exponentially converge to the origin.
\end{lemma}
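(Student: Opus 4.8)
The plan is to exploit the cascade structure of \eqref{eq:actuatorErrorDynamics}: the $\tilde{\bs{\omega}}$-dynamics are autonomous and drive the $\tilde{\bs{\omega}}_s$-dynamics, with no feedback in the reverse direction. First I would analyze the driving subsystem $\tau_m\dot{\tilde{\bs{\omega}}} = -\tilde{\bs{\omega}}$. It is linear and decoupled, with the explicit solution $\tilde{\bs{\omega}}(t) = e^{-t/\tau_m}\tilde{\bs{\omega}}(0)$, so $\tilde{\bs{\omega}}(t)\to 0$ exponentially with rate $1/\tau_m$, independently of any other signal.

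Next I would treat the $\tilde{\bs{\omega}}_s$-subsystem $\tau_m\dot{\tilde{\bs{\omega}}}_s = -2\tilde{\bs{\omega}}_s + 2W_\text{des}\tilde{\bs{\omega}}$, regarding $2W_\text{des}\tilde{\bs{\omega}}$ as an exogenous input. The crucial observation is that although $W_\text{des} = \text{diag}[\bs{\omega}_\text{des}]$ is time-varying, the hypothesis $\omega_{\operatorname{des}_i}\le\omega_{\max}$ gives the uniform bound $\norm{W_\text{des}(t)}\le\omega_{\max}$ for all $t$, hence $\norm{2W_\text{des}\tilde{\bs{\omega}}}\le 2\omega_{\max}\norm{\tilde{\bs{\omega}}}\le 2\omega_{\max}e^{-t/\tau_m}\norm{\tilde{\bs{\omega}}(0)}$. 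The unforced part $\tau_m\dot{\tilde{\bs{\omega}}}_s = -2\tilde{\bs{\omega}}_s$ is exponentially stable with rate $2/\tau_m$, so a variation-of-constants estimate yields $\norm{\tilde{\bs{\omega}}_s(t)} \le e^{-2t/\tau_m}\norm{\tilde{\bs{\omega}}_s(0)} + (2\omega_{\max}/\tau_m)\big(\int_0^t e^{-2(t-s)/\tau_m}e^{-s/\tau_m}\,ds\big)\norm{\tilde{\bs{\omega}}(0)}$. Evaluating the convolution of the two decaying exponentials bounds the second term by $2\omega_{\max}e^{-t/\tau_m}\norm{\tilde{\bs{\omega}}(0)}$, so the full state $(\tilde{\bs{\omega}}_s,\tilde{\bs{\omega}})$ decays exponentially to the origin.

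Alternatively — and this is the version I would probably write up, to match the Lyapunov style used elsewhere in the paper — I would use a composite Lyapunov function $V = d\,\tilde{\bs{\omega}}_s^\top\tilde{\bs{\omega}}_s + \tilde{\bs{\omega}}^\top\tilde{\bs{\omega}}$ with a small constant $d>0$. Then $\frac{d}{dt}(\tilde{\bs{\omega}}^\top\tilde{\bs{\omega}}) = -(2/\tau_m)\norm{\tilde{\bs{\omega}}}^2$ and $\frac{d}{dt}(\tilde{\bs{\omega}}_s^\top\tilde{\bs{\omega}}_s) \le -(4/\tau_m)\norm{\tilde{\bs{\omega}}_s}^2 + (4\omega_{\max}/\tau_m)\norm{\tilde{\bs{\omega}}_s}\norm{\tilde{\bs{\omega}}}$; bounding the cross term with Young's inequality and choosing $d$ small enough gives $\dot V \le -cV$ for some $c>0$, i.e., global exponential stability of the origin. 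This parallels the composite-Lyapunov argument invoked in \textit{Theorem \ref{thm:SFStability}} and in the Appendix.

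I do not expect a genuine obstacle here: the statement is the standard fact that the cascade of one exponentially stable linear system into another, through a uniformly bounded (possibly time-varying) interconnection gain, is exponentially stable. The only delicate point is the time dependence of $W_\text{des}$, and that is dispatched solely by the uniform bound $\omega_{\operatorname{des}_i}\le\omega_{\max}$ — no structural property of $W_\text{des}$ beyond boundedness is needed, which is precisely why the lemma is stated for bounded input.
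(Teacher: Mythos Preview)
Your proposal is correct and, in the composite-Lyapunov version you say you would write up, is essentially identical to the paper's proof: the paper takes $V_{\tilde{\omega}_s} = \tilde{\bs{\omega}}_s^\top\tilde{\bs{\omega}}_s$, $V_{\tilde{\omega}} = \tilde{\bs{\omega}}^\top\tilde{\bs{\omega}}$, forms $V_v = d_2 V_{\tilde{\omega}_s} + V_{\tilde{\omega}}$ with $d_2>0$ small, and invokes the cascade result in the Appendix. Your variation-of-constants argument is a valid and slightly more explicit alternative, but the Lyapunov route is exactly what the paper does.
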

\begin{proof}
The actuator error dynamics are a cascaded system with the Lyapunov functions
\begin{equation}
    V_{\tilde{\omega}_s} = \tilde{\bs{\omega}}_s^\top\tilde{\bs{\omega}}_s, \quad V_{\tilde{\omega}} = \tilde{\bs{\omega}}^\top\tilde{\bs{\omega}},
\end{equation}
and the composite Lyapunov function
\begin{equation}\label{eq:v_nu}
    V_v = d_2V_{\tilde{\omega}_s} + V_{\tilde{\omega}}, \quad d_2>0.
\end{equation}
Using the general result for cascaded systems in the Appendix, it can be shown that the origin is exponentially stable where $d_2$ is chosen small enough.
\end{proof}

Combining these states as $\bs{v} = [\tilde{\bs{\omega}}_s^\top \ \tilde{\bs{\omega}}^\top]^\top$, the entire actuator error dynamic system can be written as
\begin{equation}\label{eq:actuatorErrorDynamicsCondensed}
    \dot{\bs{v}} = A_\omega\bs{v}, \quad \text{where} \ A_\omega = \left[\begin{matrix} -2I_n & 2W_\text{des} \\ 0_n & -I_n \end{matrix}\right].
\end{equation}
The systems \eqref{eq:simpleEHGOSystem} and \eqref{eq:actuatorErrorDynamicsCondensed} can be cascaded as
\begin{equation}\label{eq:EHGOcascadeSystem}
    \begin{split}
        \epsilon\dot{\bs{\eta}}^i &= F_i\bs{\eta}^i + B_1^i\left[\Delta \bar{f}^i + \left[\bar{G}^i(\bs{\theta}_1) \ 0_{9\times n}\right]\bs{v}\right], \\
        \tau_m\dot{\bs{v}} &= A_\omega \bs{v}.
    \end{split}
\end{equation}
\begin{theorem}[Convergence of EHGO Estimates]\label{thm:EHGOConvergence}
    The estimates of the EHGOs, $\hat{\bs{\chi}}$, in \eqref{eq:simpleEHGOSystem} converge exponentially to the states, $\bs{\chi}$, i.e., $\bs{\eta}$ converges exponentially to the origin for any $\epsilon \in (0,\epsilon^*)$ for sufficiently small $\epsilon^*>0$.
\end{theorem}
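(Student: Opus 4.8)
The plan is to treat \eqref{eq:simpleEHGOSystem}--\eqref{eq:EHGOcascadeSystem} as a two--time--scale cascade in which the plant trajectory $(\bs{q},\bs{x}_c,\bs{\theta}_1)$ is a bounded exogenous signal confined, by Lemma~\ref{lem:restrictingDomainOfAnalysis} and Corollary~\ref{cor:rotationalSet}, to a compact positively invariant set. First I would use the observer--gain design: since the polynomials with coefficients $\alpha_j^i$ are Hurwitz, each $F_i$ and hence the block--diagonal $F=\operatorname{diag}(F_1,F_2,F_3)$ is Hurwitz, so there is a block--diagonal $P=P^\top\succ 0$ solving $PF+F^\top P=-I$; I take $V_\eta=\bs{\eta}^\top P\bs{\eta}$ as a Lyapunov function for the fast $\bs{\eta}$--dynamics, with $\lambda_{\min}(P)\norm{\bs{\eta}}^2\leq V_\eta\leq\lambda_{\max}(P)\norm{\bs{\eta}}^2$.

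Next I would dispatch the upstream block of the cascade \eqref{eq:EHGOcascadeSystem}: by the preceding lemma on the actuator dynamics, $\tau_m\dot{\bs{v}}=A_\omega\bs{v}$ is globally exponentially stable uniformly over the bounded gain $W_{\text{des}}$, so $\norm{\bs{v}(t)}\leq k_v e^{-a_v t}\norm{\bs{v}(0)}$ for some $k_v,a_v>0$. For the $\bs{\eta}$--block I would then bound the two forcing terms on the operating domain. On $\Omega_q$ the measured orientation $\bs{\theta}_1$ stays in the compact set of Corollary~\ref{cor:rotationalSet}, so $\Psi,\Psi^{-1},\dot{\Psi},G,R_3$ and the computed signal $\dot{\bar{\bs{\theta}}}_r$ are bounded; hence $\norm{\bar{G}(\bs{\theta}_1)}\leq\bar{g}$ and $\bar{f}(\cdot,\bs{\theta}_1,\dot{\bar{\bs{\theta}}}_r)$ is Lipschitz. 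The key structural point is that $\bar{f}$ depends on $\hat{\bs{\chi}}$ only through the non--extended error states $\hat{\bs{\xi}}$ (in fact only $\hat{\bs{\xi}}_2$), whose estimation errors carry strictly positive powers of $\epsilon$ under the scaling $\bs{\eta}_j^i=(\bs{\chi}_j^i-\hat{\bs{\chi}}_j^i)/\epsilon^{\varrho_i-j}$; consequently $\norm{\Delta\bar{f}}\leq L\epsilon\norm{\bs{\eta}}$ for $\epsilon\leq 1$ (peaking of $\hat{\bs{\chi}}$ during the initial transient is removed by the saturation used for output feedback, so this bound may be taken uniform for the error--dynamics analysis).

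Differentiating $V_\eta$ along \eqref{eq:simpleEHGOSystem}, $\dot{V}_\eta=\tfrac{1}{\epsilon}\bigl[-\norm{\bs{\eta}}^2+2\bs{\eta}^\top PB_1\Delta\bar{f}+2\bs{\eta}^\top PB_1\bar{G}\tilde{\bs{\omega}}_s\bigr]$, and the two bounds above together with Young's inequality give, for every $\epsilon$ below some $\epsilon^*>0$, an estimate of the form $\dot{V}_\eta\leq-\tfrac{a_1}{\epsilon}V_\eta+\tfrac{c_1}{\epsilon}\norm{\bs{v}}^2$ with $a_1,c_1>0$. I would then close the cascade: because $\norm{\bs{v}(t)}^2$ decays exponentially while the $\bs{\eta}$--flow contracts at the fast rate $a_1/\epsilon$, the comparison lemma (equivalently, the generic cascade argument in the Appendix) yields $V_\eta(t)\leq\kappa_1 e^{-a_1 t/\epsilon}V_\eta(0)+\kappa_2 e^{-2a_v t}\norm{\bs{v}(0)}^2$, so $\bs{\eta}(t)$---and therefore $\hat{\bs{\chi}}(t)-\bs{\chi}(t)$---converges to zero exponentially for every $\epsilon\in(0,\epsilon^*)$.

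The step I expect to be the main obstacle is making the estimate $\norm{\Delta\bar{f}}=O(\epsilon)\norm{\bs{\eta}}$ rigorous in tandem with the time--scale separation: it rests on the structural fact that the nonlinearities injected into the EHGO depend only on the measured orientation and on the scaled (non--extended) error states, and, if one does not saturate the estimates inside the observer itself, on a separate peaking argument over an initial layer $[0,T(\epsilon)]$ with $T(\epsilon)\to 0$ that keeps $\hat{\bs{\xi}}$ in the region where $f$ is Lipschitz; once that is in place, the Lyapunov/comparison bookkeeping is routine.
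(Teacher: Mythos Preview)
Your proposal is correct and follows essentially the same route as the paper: a quadratic Lyapunov function built from the Hurwitz $F_i$, the Lipschitz bound $\norm{\Delta\bar{f}}\leq \epsilon L\norm{\bs{\eta}}$ coming from the scaling (your structural remark that $\bar{f}$ depends only on $\hat{\bs{\xi}}_2$, which carries a positive power of $\epsilon$, is exactly what underlies the paper's terse implication $\norm{\Delta\bar{f}^i}\leq L_\eta\norm{\bs{\chi}_i-\hat{\bs{\chi}}_i}\Rightarrow\norm{\Delta\bar{f}^i}\leq\epsilon L_\eta\norm{\bs{\eta}^i}$), and then a cascade argument with the actuator error dynamics. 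The only cosmetic difference is that the paper closes the cascade with the composite Lyapunov function $V_{\hat{\chi}}=d_3V_\eta+V_v$ and invokes the Appendix result directly, whereas you first pass through a comparison--lemma estimate; since you note the equivalence yourself, there is no substantive divergence.
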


\begin{proof}
Using cascade analysis, a composite Lyapunov function for the cascaded system \eqref{eq:EHGOcascadeSystem} is designed. The Lyapunov function for the actuator error system is \eqref{eq:v_nu} and a Lyapunov function for the EHGO error system with the input, $\bs{v}$, set to zero is written as
\begin{equation}
     \epsilon V_\eta = \sum_{i=1}^3(\bs{\eta}^i)^\top P_3^i\bs{\eta}^i, \quad P_3^i F_i + F_i^\top P_3^i = -I,
\end{equation}
leading to
\begin{equation*}
    \epsilon \dot{V}_\eta = \sum_{i=1}^3-(\bs{\eta}^i)^\top\bs{\eta}^i + 2(\bs{\eta}^i)^\top P_3^i B_1^i\Delta \bar{f}^i.
\end{equation*}

The function $\bar{f}^i(\bs{\chi},\bs{\theta}_1,\dot{\bar{\bs{\theta}}}_r)$ is a function of time through $\bs{\theta}_1$ and $\dot{\bar{\bs{\theta}}}_r$, is uniformly bounded in time, and $\bar{f}^i$ and its partial derivatives are continuous. Therefore, it is Lipschitz in $\bs{\chi}$ on $\Omega_q$ and $\Delta \bar{f}^i$ can be bounded by
\begin{equation*}
    \norm{\Delta \bar{f}^i} \leq L_\eta\norm{\bs{\chi}_i - \hat{\bs{\chi}}_i} \Rightarrow \norm{\Delta \bar{f}^i} \leq \epsilon L_\eta\norm{\bs{\eta}^i},
\end{equation*}
leading to the following bound on the derivative of the Lyapunov function
\begin{equation*}
    \begin{split}
        \epsilon \dot{V}_\eta &\leq \sum_{i=1}^3\left(-\norm{\bs{\eta}^i}^2 + 2\epsilon L_\eta\norm{\bs{\eta}^i}^2\norm{P_3^i B_1^i}\right), \\
        \epsilon \dot{V}_\eta &\leq -\norm{\bs{\eta}}^2 + 2\epsilon L_\eta\norm{N}\norm{\bs{\eta}}^2,
    \end{split}
\end{equation*}
where the elements of the diagonal matrix $N$ are $N_i = \norm{P_3^i B_1^i}$. Since $\alpha_j^i$ are tunable and $\epsilon$ is a design parameter, pick $\epsilon$ such that $2\epsilon L_\eta \norm{N} \leq \frac{1}{2}$ resulting in the following inequality
\begin{equation}
    \epsilon \dot{V}_\eta \leq -\frac{1}{2}\norm{\bs{\eta}}^2.
\end{equation}
Taking the composite Lyapunov function as 
\begin{equation}
    V_{\hat{\chi}} = d_3V_\eta + V_v, \quad d_3>0,
\end{equation}
and following the Appendix, the origin of \eqref{eq:EHGOcascadeSystem} is exponentially stable.
\end{proof}


The complete scaled observer error system \eqref{eq:fullObserverErrorDynamics} is the same as \eqref{eq:simpleEHGOSystem} with an added perturbation. The perturbation is bounded by $\epsilon\varphi(t,\bs{q},\bs{x}_c)<\epsilon \kappa$ for some  $\kappa \inr_{>0}$ and is continuous and bounded, therefore it can be treated as a nonvanishing perturbation. Following \textit{Lemma 9.2} in \cite{khalil2002nonlinear} and \textit{Theorem 2}, the estimation error of the EHGO converges exponentially to an $O(\epsilon)$ neighborhood of the origin. 
\begin{remark}[Peaking Phenomenon]\label{rmk:peaking} 
The EHGO estimation error $\tilde{\chi}_i = \chi_i - \hat{\chi}_i$ can be bounded by
\begin{equation}
    |\tilde{\chi}_i| \leq \frac{b}{\epsilon^{\varrho - 1}}\norm{\bs{\tilde{\chi}}(0)}e^{-at/\epsilon},
\end{equation}
for some positive constants $a$ and $b$, by \textit{Theorem 2.1} in \cite{khalil2017HGO}. Initially, the estimation error can be very large, i.e., $O(1/\epsilon^{\varrho-1})$, but will decay rapidly. To prevent the peaking of the estimates from entering the plant during the initial transient, the output feedback controller needs to be saturated outside a compact set of interest. This is done by saturating the individual estimates and results in the output feedback controller  $\bs{u}(\hat{\bs{\chi}}_s)$.


There is some set $\{V_\eta\leq \epsilon^2c\}$ for some $c\inr_{>0}$ that the estimation error will enter after some short time, $T(\epsilon)$, where $\lim_{\epsilon \rightarrow 0}T(\epsilon) = 0$. Since the initial state $\bs{q}(0)$ resides on the interior of $\Omega_q$, choosing $\epsilon$ small enough will ensure that $\bs{q}$ will not leave the set $\Omega_q$ during the interval $[0,T(\epsilon)]$. This establishes the boundedness of all states.
\end{remark}

\subsection{Stability Under Output Feedback}
The system under output feedback is a singularly perturbed system which can be split into two time-scales. The multi-rotor dynamics and control reside in the slow time-scale while the observer and actuator dynamics reside in the fast time-scale.
\begin{theorem}[Stability Under Output Feedback]\label{OFStability}
    The closed-loop system under output feedback, with initial conditions $\bs{q}(0)$ on the interior of $\Omega_q$ will exponentially converge to the origin when $\epsilon$ is chosen small enough.
\end{theorem}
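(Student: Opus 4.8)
The plan is to read the closed loop under output feedback as a singularly perturbed interconnection: the multi-rotor error state $\bs q=[\bs\rho_1^\top\ \bs\rho_2^\top\ \bs\xi_1^\top\ \bs\xi_2^\top]^\top$ evolves in the slow time scale, while the scaled EHGO error $\bs\eta$ and the actuator error $\bs v$ evolve in the fast time scale. I would assemble the theorem from the three facts already in hand: exponential stability of the slow subsystem under state feedback with compact positively invariant $\Omega_q$ (\textit{Theorem \ref{thm:SFStability}}), exponential convergence of the fast subsystem (\textit{Theorem \ref{thm:EHGOConvergence}}, together with the $O(\epsilon)$ nonvanishing perturbation noted after it), and the peaking/saturation bound of \textit{Remark \ref{rmk:peaking}}.

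First I would close the loop: substitute the saturated output-feedback laws \eqref{eq:rotationalOutputFeedback}--\eqref{eq:translationalForcingOF} (built from $\hat{\bs\chi}_s$) into the true plant. On $\Omega_q$ the feedback-linearizing maps are smooth and bounded away from the Euler singularities by \textit{Corollary \ref{cor:rotationalSet}}, and the saturation levels $k_{\chi_i}$ were chosen so the $\operatorname{sat}(\cdot)$'s are inactive whenever the true state lies in $\Omega_q$; hence each control signal equals the corresponding state-feedback signal evaluated at the true $\bs\chi$ plus an error that is Lipschitz in $(\bs\eta,\bs v)$ and vanishes at $(\bs\eta,\bs v)=0$. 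The slow dynamics then read $\dot{\bs q}=f_q(t,\bs q)+\Delta(t,\bs q,\bs\eta,\bs v)$ with $f_q$ the state-feedback closed loop of \textit{Theorem \ref{thm:SFStability}} and $\norm{\Delta}\le L(\norm{\bs\eta}+\norm{\bs v})$ on $\Omega_q$, while the fast dynamics are \eqref{eq:EHGOcascadeSystem} plus the $O(\epsilon)$ term $\epsilon B_2^i\varphi^i$.

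Next I would dispose of peaking exactly as in \textit{Remark \ref{rmk:peaking}}: there is $T(\epsilon)\to 0$ after which the scaled error enters $\{V_\eta\le\epsilon^2 c\}$, and since $\bs q(0)$ is interior to $\Omega_q$ and the control is saturated, choosing $\epsilon$ small keeps $\bs q(t)\in\Omega_q$ on $[0,T(\epsilon)]$. Restarting at $t=T(\epsilon)$ with $\bs q\in\Omega_q$ and $(\bs\eta,\bs v)$ small, the saturations never activate again. From there I would form the composite Lyapunov function $V=V_{sf}(\bs q)+\mu\,V_{\hat\chi}(\bs\eta,\bs v)$ with $V_{sf}$ from \textit{Theorem \ref{thm:SFStability}}, $V_{\hat\chi}$ the cascade Lyapunov function from \textit{Theorem \ref{thm:EHGOConvergence}}, and $\mu>0$ small. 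Differentiating yields $\dot V_{sf}\le -c_1\norm{\bs q}^2+c_2\norm{\bs q}\,(\norm{\bs\eta}+\norm{\bs v})$ and $\dot V_{\hat\chi}\le -(c_3/\epsilon)\norm{(\bs\eta,\bs v)}^2+c_4\norm{\bs q}\,\norm{(\bs\eta,\bs v)}+c_5\epsilon$ (the cross term from the $\bs q$-dependence of $\Delta\bar f^i$ and $\bar G^i$, the last from $\epsilon\varphi$). Choosing $\mu$ then $\epsilon$ small renders the quadratic form in $(\norm{\bs q},\norm{\bs\eta},\norm{\bs v})$ negative definite up to an $O(\epsilon)$ residual, so $\dot V\le -c_6 V+O(\epsilon)$; together with positive invariance of $\Omega_q$ this gives boundedness of the full state and exponential convergence to an $O(\epsilon)$ neighborhood of the origin (to the origin itself when the disturbance-generating terms $\varphi$ vanish, e.g.\ for constant disturbances), which is the assertion of the theorem.

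The main obstacle is the first two steps taken together: showing rigorously that under the \emph{saturated} estimates the set $\Omega_q$ is never left --- during the peaking transient and afterwards --- and that, once the state and the fast error are both small, the output-feedback control error is genuinely Lipschitz-small in $(\bs\eta,\bs v)$ \emph{uniformly in time}. This requires carefully coordinating the choice of the saturation levels $k_{\chi_i}$ with $\Omega_q$, using \textit{Corollary \ref{cor:rotationalSet}} to keep $G^{-1}(\bs\theta_1)$ and the arctangent maps bounded, and checking that the boundedness of $W_{\operatorname{des}}$ used in \eqref{eq:actuatorErrorDynamicsCondensed} is preserved. The singular-perturbation bookkeeping in the last step is then routine given the cascade lemma in the Appendix.
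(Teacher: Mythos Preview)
Your proposal is correct and follows essentially the same singular-perturbation route as the paper: split the closed loop into the slow subsystem $\bs q$ and the fast subsystem $(\bs\eta,\bs v)$, invoke \textit{Theorems~\ref{thm:SFStability}} and~\textit{\ref{thm:EHGOConvergence}} for each, use \textit{Remark~\ref{rmk:peaking}} for the transient, and combine. The paper's own proof is terser---it writes the system in the form \eqref{eq:singularlyPerturbedForm} and then directly invokes Theorem~11.4 of \cite{khalil2002nonlinear} rather than constructing the composite Lyapunov function you spell out---and your more careful distinction between convergence to the origin versus an $O(\epsilon)$ neighborhood (due to the nonvanishing $\varphi$ term) is a point the paper's statement glosses over.
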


\begin{proof}
The entire output feedback closed-loop system can now be written in singularly perturbed form
\begin{subequations}\label{eq:singularlyPerturbedForm}
\begin{equation}\label{eq:slowSystem}
        \dot{\bs{q}} = A_c\bs{q} + O(\epsilon), \quad \quad \quad \quad \quad \quad \quad \quad \
\end{equation}
\begin{equation}\label{eq:boundaryLayerSystem}
    \begin{split}
        \epsilon\dot{\bs{\eta}}^i &= F_i\bs{\eta}^i + B_1^i\left[\Delta \bar{f}^i + \left[\bar{G}^i(\bs{\theta}_1) \ 0_{9\times n}\right]\bs{v}\right], \\
        \tau_m\dot{\bs{v}} &= A_\omega \bs{v},
    \end{split}
\end{equation}
\end{subequations}
where
\begin{equation*}
    A_c = \left[\begin{smallmatrix} A_\rho & 0_6 \\ 0_6 & A_\xi \end{smallmatrix}\right],
\end{equation*}
and the $O(\epsilon)$ perturbation in \eqref{eq:slowSystem} is due to any estimation error.
This closed-loop system has a two-time-scale structure because $\epsilon$ and $\tau_m$ are small. Since the effect of the $O(\epsilon)$ perturbation in \eqref{eq:fullObserverErrorDynamics} vanishes as $\epsilon$ is pushed to zero, the boundary layer system can be taken as \eqref{eq:boundaryLayerSystem}. The slow dynamics can be taken as \eqref{eq:slowSystem}. From \textit{Theorem \ref{thm:EHGOConvergence}}, the origin of the boundary layer system is an exponentially stable equilibrium point, and from \textit{Theorem \ref{thm:SFStability}}, the origin of the slow system is an exponentially stable equilibrium point. Following \textit{Theorem 11.4} in \cite{khalil2002nonlinear},
it can be shown that the entire closed-loop system with output feedback control has an exponentially stable equilibrium point at the origin. 
\end{proof}

\section{Simulation}\label{sec:simulation}
The control algorithm is implemented where the reference system is taken as a ground vehicle on which the multi-rotor will land. The initial position of the multi-rotor is $\bs{p}_1(0) = [1, \ 1, \ -4]^\top$ and the initial position of the ground vehicle is $\bs{x}_{c_1}(0) = [5, \ 0, \ -0.5]^\top$. The ground vehicle follows a figure-8 trajectory and the multi-rotor is tasked with tracking and landing on the ground vehicle. While only having a position measurement of the ground vehicle, with added noise, the multi-rotor is able to land on the vehicle, as shown in Fig. \ref{fig:simulationResults}. The multi-rotor is able to make this landing while canceling disturbances of $\bs{\sigma}_\xi = [\sin(t) \ \cos(t) \ \sin(t)]^\top$, and $\bs{\sigma}_\rho = [\cos(t) \ \sin(t) \ \cos(t)]^\top$ with Gaussian white noise added to all measurement signals.
\begin{figure}
    \centering
    \includegraphics[width=80mm]{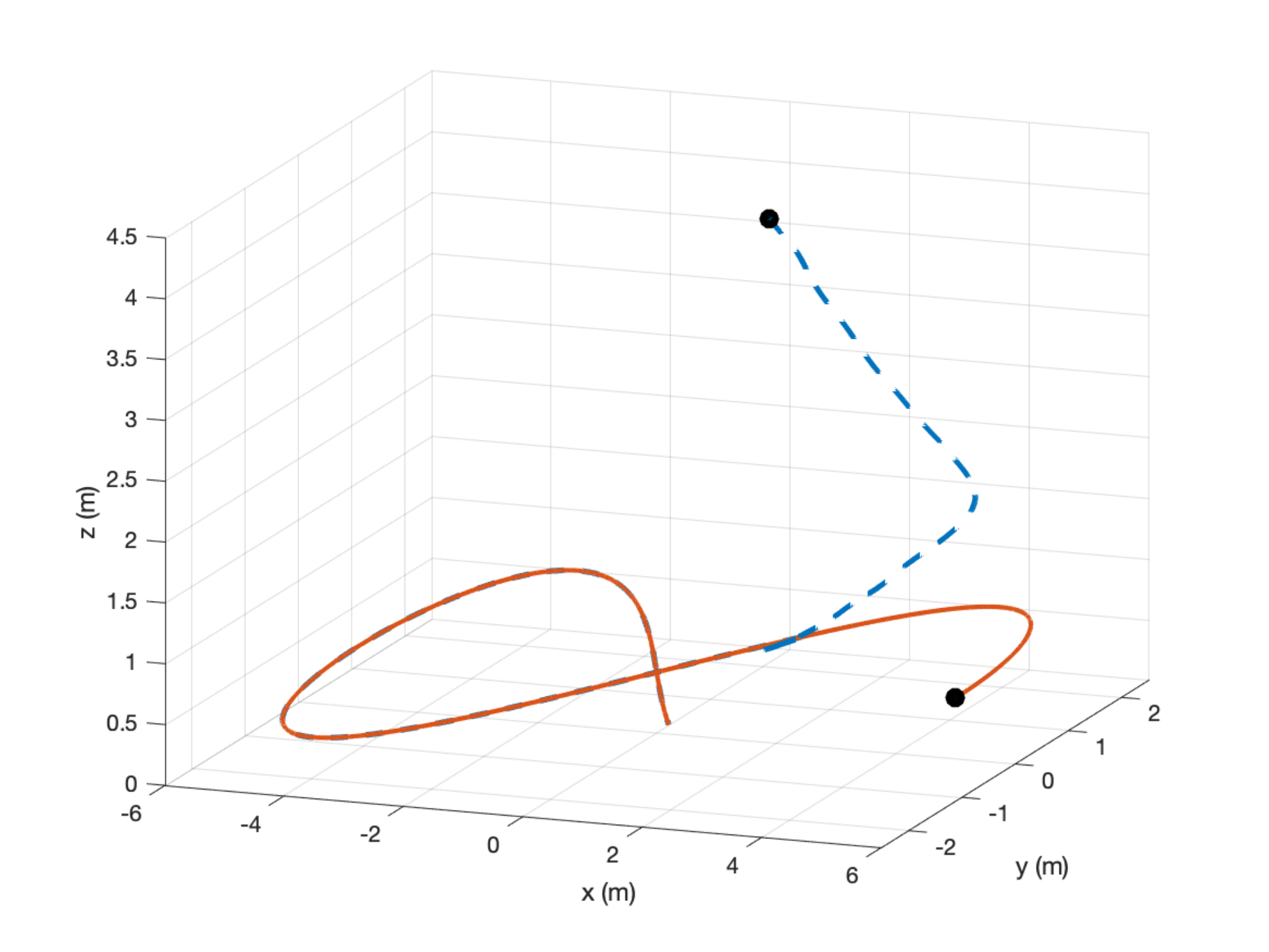}
    \caption{The trajectory of the ground vehicle (solid) and the trajectory of the multi-rotor UAV (dashed).}
    \label{fig:simulationResults}
\end{figure}

\section{Conclusion and Future Directions}\label{sec:conclusion}
We studied an unknown trajectory tracking problem for a multi-rotor with modeling error and external disturbances. The unknown trajectory is generated from a dynamical system with unknown or partially known dynamics. We designed and rigorously analyzed an EHGO-based output feedback controller. We illustrated the controller using the example of the multi-rotor landing on a mobile platform. 

We plan to extend this work in several possible directions. First, we are conducting experiments to characterize the efficacy of the proposed controller in an actual test-bed. Second, we plan to extend the proposed trajectory tracking framework to allow for actuator failures. 


\section*{Appendix}
\bit{Stability of Generalized Cascade System:}
A generalized stability proof for cascaded systems is adapted from Appendix C.1 of \cite{khalil2015nonlinear}. Consider the cascade connection of two systems
\begin{equation}\label{eq:genericCascadedSystem}
    \dot{\eta} = f_1(t,\eta,\xi), \quad \dot{\xi} = f_2(\xi),
\end{equation}
where $f_1$ and $f_2$ are locally Lipschitz and $f_1(t,0,0)=0$, $f_2(0)=0$. Assuming the origin of $\dot{\xi} = f_2(\xi)$ is exponentially stable, there is a continuously differentiable Lyapunov function, $V_2(\xi)$, that satisfies the following inequalities
\begin{subequations}
\begin{equation}
    c_1\norm{\xi}^2 \leq V_2(\xi) \leq c_2\norm{\xi}^2,
\end{equation}
\begin{equation}\label{eq:V4Inequalities}
    \quad \frac{\partial V_2(\xi)}{\partial \xi}f_2(\xi) \leq -c_3\norm{\xi}^2,
\end{equation}
\begin{equation}
    \norm{\frac{\partial V_2(\xi)}{\partial \xi}} \leq c_4\norm{\xi}.
\end{equation}
\end{subequations}
Now, suppose there is a continuously differentiable Lyapunov function that satisfies the inequalities
\begin{equation}
    \frac{\partial V_1}{\partial \eta}f_1(t,\eta,0) \leq -c\norm{\eta}^2, \quad \norm{\frac{\partial V_1}{\partial \eta}} \leq k\norm{\eta}.
\end{equation}
Take a composite Lyapunov function for the cascaded system as
\begin{equation}
    V(\eta,\xi) = b V_1(\eta) + V_2(\xi), \quad b > 0,
\end{equation}
in which $b$ can be chosen. The derivative $\dot{V}$ satisfies
\begin{equation*}
    \begin{split}
        \dot{V}(\eta,\xi) &= b\frac{\partial V_1(\eta)}{\partial \eta}f_1(t,\eta,0) + \\
        &b\frac{\partial V_1(\eta)}{\partial \eta}\left[f_1(t,\eta,\xi) - f_1(t,\eta,0)\right] + \frac{\partial V_2(\xi)}{\partial \xi}f_2(\xi),\\
        \dot{V}(\eta,\xi) &\leq -b c\norm{\eta}^2 + b k L\norm{\eta}\norm{\xi} - c_3\norm{\xi}^2,
    \end{split}
\end{equation*}
where $f_1$ and its partial derivatives are continuous on the set of interest and $f_1$ is uniformly bounded in time, therefore it is locally Lipschitz. Here $L$ is a Lipschitz constant of $f_1$ with respect to $\xi$. The inequality can be written in quadratic form as
\begin{equation*}
    \begin{split}
        \dot{V} &\leq -\left[\begin{matrix} \norm{\eta} \\ \norm{\xi}\end{matrix}\right]^\top \left[\begin{matrix} b c & \frac{-b k L}{2}\\ \frac{-b k L}{2} & c_3 \end{matrix}\right] \left[\begin{matrix} \norm{\eta} \\ \norm{\xi}\end{matrix}\right], \\
        &= -\left[\begin{matrix}\norm{\eta} \\ \norm{\xi}\end{matrix}\right]^\top Q \left[\begin{matrix} \norm{\eta} \\ \norm{\xi}\end{matrix}\right] \leq -\lambda_{min}(Q)\norm{\left[\begin{matrix} \norm{\eta} \\ \norm{\xi}\end{matrix}\right]}^2,
    \end{split}
\end{equation*}
where $b$ is chosen such that $b < 4cc_3/(k L)^2$ to ensure $Q$ is positive definite. The foregoing analysis shows that the origin of \eqref{eq:genericCascadedSystem} is exponentially stable.

{\footnotesize

}

%







\end{document}